\newcommand\R{{\mathbb R}}
\def\B{\{0,1\}}
\def\bx{{\bf x}}
\def\by{{\bf y}}
\def\bz{{\bf z}}
\newcommand\norm[1]{\|#1\|}
\newcommand\I[1]{\mathbb{I} \left[{#1}\right]}
\renewcommand\P[1]{\mathbb{P}\left[{#1}\right]}
\newcommand\PP[2]{\mathbb{P}_{#1}\left[{#2}\right]}
\declaretheorem[numberwithin=section]{theorem}
\declaretheorem[sibling=theorem]{lemma}
\declaretheorem[sibling=theorem]{proposition}
\declaretheorem[sibling=theorem]{claim}
\declaretheorem[sibling=theorem]{remark}
\declaretheorem[sibling=theorem]{fact}
\declaretheorem[sibling=theorem]{definition}
\newenvironment{proofof}[1]{{\medbreak\noindent \em Proof of #1.  }}{\hfill\qed\medbreak}
\newcommand\E{\mathop{\mathbb E}\displaylimits}
\newcommand\pr{\mathop{\mathbb P}\displaylimits}
\DeclareMathOperator{\poly}{poly}
\DeclareMathOperator{\sse}{SSE}
\DeclareMathOperator{\diam}{diam}
\DeclareMathOperator{\supp}{supp}
\DeclareMathOperator{\argmin}{argmin}
\DeclareMathOperator{\sdp}{sdp}
\DeclareMathOperator{\sdpL}{L(k)}
\DeclareMathOperator{\sdpSA}{SA(2k)}
\def\to{\rightarrow}
\def\eps{\epsilon}
\def\cR{{\mathcal R}}
\def\d{d}
\def\cD{{\mathcal D}}
\def\cP{{\mathcal P}}
\def\hd{{d}}
\def\sd{d^2}
\def\radius{{\delta}}
\def\L{{\mathcal L}}
\newcommand\cE[1]{{\mathcal E}(#1)}
\begin{document}

\title{Improved ARV Rounding in Small-set Expanders \\and Graphs of Bounded Threshold Rank}

\author{Shayan Oveis Gharan
\thanks{Department of Management Science and Engineering, Stanford University. Supported by a Stanford Graduate Fellowship. My research is partly supported by  grant \#FA9550-12-1-0411
from the U.S. Air Force Office of Scientific Research (AFOSR) and the
Defense Advanced Research Projects Agency (DARPA).
Email:\protect\url{shayan@stanford.edu}.}
\and
Luca Trevisan\thanks{Department of Computer Science, Stanford University. This material is based upon  work supported by the National Science Foundation under grant No.  CCF 1017403 and CCF 1216642 and United States - Israel Binational Science Foundation under grant No. BSF 2010451.
Email:\protect\url{trevisan@stanford.edu}.}
}

\maketitle

\begin{abstract}
We prove a structure theorem for the feasible solutions of the Arora-Rao-Vazirani SDP relaxation
on low threshold rank graphs and on small-set expanders.
 We show that if $G$ is a graph of bounded threshold rank or a small-set expander, then
an optimal solution of the Arora-Rao-Vazirani relaxation (or of any stronger version of it) can be almost entirely
covered by a small number of balls of bounded radius. 

Then, we show that, if $k$ is the number of balls, a solution of this form can be rounded with an approximation factor of $O(\sqrt {\log k})$ in the case of the Arora-Rao-Vazirani relaxation, and with a constant-factor approximation in the
case of the $k$-th round of the Sherali-Adams hierarchy starting at the Arora-Rao-Vazirani relaxation.

The structure theorem and the rounding scheme combine to prove the following result, where $G=(V,E)$ is
a graph of expansion $\phi(G)$, $\lambda_k$ is the $k$-th smallest eigenvalue of the normalized Laplacian of $G$, and $\phi_k(G) = \min_{\text{disjoint } S_1,\ldots,S_k} \max_{1\leq i\leq k} \phi(S_i)$ is the largest expansion of any $k$ disjoint subsets of $V$:
 if either  $\lambda_k \gtrsim \log^{2.5} k \cdot \phi(G)$ or $\phi_{k} (G) \gtrsim \log k \cdot \sqrt{\log n}\cdot \log\log n\cdot \phi(G)$, then the Arora-Rao-Vazirani relaxation can be rounded in polynomial time with an approximation ratio $O(\sqrt{\log k})$.

Stronger approximation guarantees are achievable in time exponential in k via relaxations in the Lasserre hierarchy. Guruswami and Sinop \cite{GS13} and Arora, Ge and Sinop \cite{AGS13} prove that $1+\eps$ approximation is achievable in time
$2^{O(k)}\poly(n)$ if either $\lambda_k > \phi(G)/\poly(\eps)$, or if $\sse_{n/k} > \sqrt{\log k \log n} \cdot \phi(G)/\poly(\eps)$,
where $\sse_s$ is the minimal expansion of sets of size at most~$s$.

\end{abstract}

\section{Introduction}
We study approximation algorithms for the uniform sparsest cut problem in regular
graphs\footnote{The analysis can be extended to the case of general graph, in which
case the problem becomes the {\em conductance} problem, and the condition on small-set
expansion becomes a condition on small-set conductance.} based on semidefinite programming.
Let $G=(V,E)$ be a $r$-regular graph. The expansion of a set $S\subseteq V$ is the ratio 
$$\phi(S) := \frac{|E(S,\bar S)|}{r\cdot |S|}.$$
We want to find  the smallest
expansion of nonempty subsets of size at most $n/2$.
 We use $\phi(G)$ to denote the
value of the optimum,
$$\phi(G):= \min_{S: 0< |S| \leq n/2} \ \phi(S).$$  
We will not define the uniform sparsest cut problem, but any approximation of expansion is,
up to an additional factor of 2, an approximation of the uniform sparsest cut problem. 

Chawla et al.~\cite{CKKRS06} show that assuming unique games conjecture there
is no constant factor approximation algorithm for a more general version of the problem known as non-uniform sparsest cut problem.
The best polynomial time (or even sub-exponential time) approximation algorithm for uniform sparsest cut
problem remains the algorithm of Arora, Rao and Vazirani \cite{ARV04}, which achieves an
$O(\sqrt{\log  n})$ approximation factor, where $n$ is the number of vertices. The Arora-Rao-Vazirani
algorithm is based on a semidefinitite programming (SDP) relaxation, which we will
refer to as the ``ARV'' relaxation.

Within the long-term research program of developing better approximation algorithms for uniform sparsest cut in general graphs, there has been much success in the past few years toward developing better algorithms for restricted classes of graphs.

The technique of subspace enumeration \cite{KT07,Kolla11,ABS10} applies to the special class of graphs known as ``low threshold rank'' graphs. 
Let $G=(V,E)$ be a $r$-regular graph, and  $\L:=I-A/r$ be the normalized laplacian matrix of $G$, where $I$ is the identity matrix and $A$ is the adjacency matrix. Let 
$$ 0=\lambda_1\leq \lambda_2\leq \ldots\lambda_n\leq 2,$$
be the eigenvalues of $\L$. A graph $G$ has a low threshold rank, if $\lambda_k=\Omega(1)$ for a small number $k$. Low threshold rank graphs can be considered as a generalization of expander graphs.

Arora, Barak and Steurer \cite{ABS10} show that  the technique of {\em subspace enumeration} developed in the work of  Kolla and Tulsiani~\cite{KT07,Kolla11},  achieves
a $O(1/\lambda_k)$ approximation in time $2^{O(k)}\poly(n)$.
Later, Barak, Raghavendra and Steurer \cite{BRS11} and Guruswami and Sinop \cite{GS11} match this $O(1/\lambda_k)$ approximation factor in time $n^{O(k)}$ by using an SDP relaxation that is derived from the Arora-Rao-Vazirani
relaxation by $k$ ``rounds'' of a procedure defined by Lasserre. The procedure starts from
a SDP relaxation of a combinatorial problem which can be formulated as a 0/1 integral program,
and defines a family of relaxations with additional variables and constraints. The $k$-th
relaxation in this family has size $n^{O(k)}$. We will refer to the $k$-th relaxation in this family
as $L_k$-ARV.

These techniques are very powerful, and they lead to approximation algorithms for many constraint satisfaction problems including maxcut, sparsest cut, min uncut, graph coloring etc \cite{AG11,GS12,GS13,OT12}.
These algorithms run in time that is exponential in $k$, and they typicallyprovide an approximation ratio of $1/\poly(\lambda_k)$.
A notable exception is the work of Guruswami and Sinop \cite{GS13}, who show that even if $\lambda_k \ll 1$, but there is a gap between $\lambda_k$ and $\phi(G)$, such as, say, $\lambda_k > 2 \phi(G)$,  then a constant-factor approximation
can be derived from $L_k$-ARV. The approximation ratio can be made arbitrarily close to one if the ratio $\lambda_k/\phi(G)$is sufficiently large.
Because of the exponential dependency on $k$ in the running time of all of the above algorithms,  we may obtain a polynomial time algorithm
only for graphs with a {\em fast growing spectrum}, i.e., if $k=O(\log n)$.

Let $\sse_s(G)$ denote the minimal expansion of a subset of $\leq s$ vertices of $G$,
$$\sse_s(G):=\min_{0<|S|\leq s} \phi(S).$$ 
Arora, Ge and Sinop \cite{AGS13} have recently announced a new rounding scheme for $L_k$-ARV that
achieves a $(1+\eps)$ approximation provided that $\sse_{n/k} (G) \gtrsim  \sqrt{\log k\cdot \log n} \cdot \sdpL / \poly(\eps)$, where $\sdpL$ is the optimum solution of the $k$ rounds of Lasserre hierarchy applied to the ARV relaxation. Roughly speaking, provided that there is a large enough gap (of order $\sqrt{\log n\log k}$) between the expansion in sets of size $\leq n/2$ and in sets of size $\leq n/k$, they can design a PTAS for the uniform sparsest cut problem. Qualitatively, this
restriction is in the same spirit as a restriction on graphs in which $\lambda_k$ is large. Indeed,
\cite{LOT12,LRTV12} prove that $\sse_{n/k} \leq O(\sqrt{\lambda_{2k} \log k})$, so that
a requirement on small-set expansion is a stronger assumption than a requirement on $\lambda_k$.
Quantitatively, however, the quadratic gap in the results of \cite{LOT12,LRTV12} makes the
restriction on small-set expansion much weaker in some cases. For example, a cycle
has expansion $\approx 1/n$, and $\sse_{n/k} \approx k/n$, while $\lambda_k \approx \frac{k^2}{n^2}$. This means that the condition $\lambda_k > 2 \phi$ is not satisfied for $k=o(\sqrt n)$, while
the condition $\sse_{n/k} > (\log k)^{O(1)} \cdot \log n \cdot \phi$ is satisfied for
$k = \tilde O(\log n)$.
Similar to the low threshold rank graphs, \cite{AGS13} may obtain a polynomial time
algorithm only if $\sse_s(G) > \poly(\log n) \cdot \phi(G)$ for $s=\Omega(n/\log n)$.

\subsection{Our Results}
In this paper we design improved polynomial time approximation algorithms for uniform sparsest cut problem on
graphs with a {\em moderately} growing spectrum, i.e., when the threshold rank can be as large as $n^{o(1)}$. 
We show that if there is a $\poly(\log k)$ gap between $\lambda_k$ and the optimum of ARV relaxation, then  the solution of ARV relaxation can be rounded with an approximation factor of $O(\sqrt{\log k})$ in polynomial time.
\begin{theorem}
\label{thm:ranklogk}
There are  universal constants $c_1,c_3$ and a randomized rounding algorithm such that for any graph $G$, if $\lambda_k > \frac{c_1}{c_3} \cdot \log^{2.5} k \cdot \sdp$, 
then the Arora-Rao-Vazirani relaxation can be rounded with an approximation ratio $O(\sqrt{\log k})$.
Here, $\sdp$ is the optimum value of the ARV relaxation.
\end{theorem}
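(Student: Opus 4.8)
The plan is to prove Theorem~\ref{thm:ranklogk} by combining two ingredients of independent interest: a \emph{spectral structure theorem}, asserting that when $\lambda_k$ is much larger than $\sdp$ an optimal ARV solution is, up to a negligible loss, covered by roughly $k$ Euclidean balls of small radius; and a \emph{rounding scheme} that converts any such clustered SDP solution into a cut of expansion $O(\sqrt{\log k})\cdot\sdp$. The theorem then follows by feeding the output of the first step into the second, choosing the polylogarithmic slack in the hypothesis exactly so that the radius produced by the structure theorem is small enough for the rounding to go through.

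For the structure theorem, fix an optimal ARV solution $\{v_i\}_{i\in V}$, translated so that $\sum_i v_i = 0$ and scaled so that $\frac{1}{n^2}\sum_{i,j}\|v_i-v_j\|^2 = 1$; then $\sum_i\|v_i\|^2 = n/2$ and, since $G$ is $r$-regular, $\frac1r\sum_{(i,j)\in E}\|v_i-v_j\|^2 = \langle \L,\ \textstyle\sum_i v_iv_i^\top\rangle = \sdp\cdot n$ in this normalization. Expanding this quadratic form in the eigenbasis $\psi_1,\dots,\psi_n$ of $\L$ (applied to the coordinate vectors of the embedding) shows that the portion of $\sum_i\|v_i\|^2$ lying along eigenvectors of eigenvalue $\ge\lambda_k$ is at most $\sdp\cdot n/\lambda_k$. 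Hence, letting $\tilde v_i$ be the projection of the embedding onto the span of the bottom $k$ eigenvectors, we get $\sum_i\|v_i-\tilde v_i\|^2\le \sdp\cdot n/\lambda_k = O(n/\log^{2.5}k)$, the $\tilde v_i$ lie in a subspace of dimension at most $k$, and they still carry almost all the spread, $\sum_{i,j}\|\tilde v_i-\tilde v_j\|^2\ge (1-o(1))n^2$. The remaining --- and main --- step is to pass from ``low rank $+$ (approximate) $\ell_2^2$-triangle inequality $+$ large spread'' to ``covered by $\poly(k)$ balls of radius $\delta$''. For this I would discard the $o(n)$ vertices on which $\|v_i-\tilde v_i\|$ is not small (so that on the rest the $\tilde v_i$ inherit the negative-type/acute-triangle inequality up to a small additive error), and invoke the higher-order Cheeger machinery of \cite{LOT12,LRTV12}: the bottom-$k$ eigenspace is well approximated by the span of the indicators of $k$ disjoint low-expansion sets $S_1,\dots,S_k$, so $\tilde v_i$ is nearly constant on each $S_j$, with the aggregate deviation $\sum_j\sum_{i\in S_j}\|\tilde v_i-c_j\|^2$ governed by the eigenvector-approximation error. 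This produces centers $c_1,\dots,c_k$ and a radius $\delta$ with $\delta^2 = \tilde O(\sdp/\lambda_k)$ such that all but an $o(1)$ fraction of the SDP mass lies within $\delta$ of some $c_j$.

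For the rounding, contract each cluster $S_j$ to a single weighted point at $c_j$ with weight $|S_j|$. Because $\delta$ is small, these $k$ weighted super-points still satisfy the $\ell_2^2$-triangle inequalities up to an additive $O(\delta)$ slack (each inner product $\langle c_j-c_{j'},c_{j''}-c_{j'}\rangle$ differs by $O(\delta)$ from a genuine ARV inner product $\ge0$), and by the spread bound they have $\sum_{j,j'}|S_j|\,|S_{j'}|\,\|c_j-c_{j'}\|^2\ge(1-o(1))n^2$. Applying the Arora--Rao--Vazirani structure lemma to this $k$-point instance yields two collections of clusters $L',R'$, each of total weight $\Omega(n)$, with $\|c_j-c_{j'}\|\ge\Omega(1/\sqrt{\log k})$ whenever $c_j\in L'$ and $c_{j'}\in R'$. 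Pulling back, let $L$ (resp.\ $R$) be the union of the well-covered vertices in clusters assigned to $L'$ (resp.\ $R'$); then for $i\in L$, $i'\in R$ we get $\|v_i-v_{i'}\|\ge\Omega(1/\sqrt{\log k})-2\delta\ge\Omega(1/\sqrt{\log k})$, provided $\delta\le c/\sqrt{\log k}$ --- and this is exactly what the hypothesis $\lambda_k\ge(c_1/c_3)\log^{2.5}k\cdot\sdp$ buys, the surplus $\log^{1.5}k$ over $\log k$ absorbing the polylog factors hidden in $\delta^2=\tilde O(\sdp/\lambda_k)$. Finally, rounding the function $i\mapsto\min_{i'\in L}\|v_i-v_{i'}\|$ by a random threshold, a standard sweep/Poincaré argument converts the $\Omega(1/\sqrt{\log k})$ separation of $L$ and $R$ together with $\frac1r\sum_{(i,j)\in E}\|v_i-v_j\|^2=\sdp\cdot n$ into a cut $S$ with $\phi(S)=O(\sqrt{\log k})\cdot\sdp$.

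The principal obstacle is the clustering step inside the structure theorem: the spectral projection immediately gives a rank-$\le k$ embedding, but turning this into a covering by only $\poly(k)$ balls of radius $\tilde O(\sqrt{\sdp/\lambda_k})$ genuinely requires the negative-type inequality together with the higher-order Cheeger description of the low eigenspace, and one must keep the uncovered SDP mass $o(1)$ so that it does not destroy the spread lower bound used in the rounding. Everything downstream --- the contraction, the application of the ARV lemma to $k$ weighted points, and the threshold rounding --- is by now routine once the clustered structure is available, and the precise exponent $2.5$ is simply the sum of the $1/\sqrt{\log k}$ radius demanded by the rounding and the polylogarithmic overhead incurred in the clustering.
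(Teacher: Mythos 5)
Your rounding scheme is essentially the paper's (\autoref{thm:arvrounding}): pick one representative $a_i$ per cluster, apply the Arora--Lee--Naor weighted ARV structure lemma (\autoref{lem:arvweighted}) to the $2k$ weighted points, pull back through the cluster map, and finish with a Fr\'echet/threshold argument. The structure theorem is where you diverge, and the divergence contains a gap. The paper never projects the ARV solution onto the bottom eigenspace of $\L$; it argues by contrapositive. After a Gaussian dimension reduction to $h=O(\log k\cdot \log(1/\eps))$ coordinates (\autoref{lem:dimreduction}, preserving distances for all but $o(n^2/k^2)$ pairs), it applies a padded random partition of $\R^h$ (\autoref{thm:rkpad}); if the $2k$ heaviest cells fail to cover a $1-\eps$ fraction of vertices, it merges the remainder into $2k$ sets each with interior of size $\geq \eps n/8k$, builds $2k$ disjointly supported Lipschitz bump functions on them, and reads off $k$ with Rayleigh quotient $O(h^2\cE{\sd_\bz}/\eps^3\radius^2)$, contradicting a large $\lambda_k$ via \autoref{fact:lambda}.

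Your route instead asserts that the spectral projection $\tilde v_i = \Pi_k v_i$ onto the span of $\psi_1,\dots,\psi_k$ must cluster because ``the bottom-$k$ eigenspace is well approximated by the span of the indicators of $k$ disjoint low-expansion sets.'' That step is not supported by the results you cite: \cite{LOT12,LRTV12} run in the opposite direction (small $\lambda_k$ yields $k$ disjoint non-expanding sets) and say nothing about the geometry of the low eigenspace. Under the hypothesis of the theorem $\lambda_k$ is large, not small, and no gap between $\lambda_{k-1}$ and $\lambda_k$ is assumed, so the Kwok-et-al-type ``eigenvector is close to a $2k$-step function'' machinery from \cite{KLLOT13} does not apply either. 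A rank-$\leq k$ embedding with small Frobenius distance from the ARV solution need not be coverable by $\poly(k)$ balls of radius $o(1/\sqrt{\log k})$ --- $n$ points in $\R^k$ can be spread uniformly in a cube. Even if one salvaged the clustering by running a padded partition directly on the $k$-dimensional projected points, the padded-partition loss scales as $h^2 = k^2$ rather than $\log^2 k$, which is far too weak; the paper uses the Gaussian reduction to dimension $O(\log k)$ in place of the spectral projection precisely to avoid this loss (see the discussion immediately preceding \autoref{lem:dimreduction}).
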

For example, if $k=n^{o(1)}$, we get a $o(\sqrt{\log n})$ approximation for uniform sparsest cut.
Consequently, to improve the $O(\sqrt{\log n})$ approximation of \cite{ARV04}
it is sufficient to find an improved approximation algorithm for the high threshold rank graphs. 

We can also improve the ARV rounding if the graph is a small set expander, i.e. $\sse_{n/k}(G) \gg \phi(G)$, and in fact a weaker condition, based on a parameter that we define next, suffices.  The {\em order $k$ expansion} of $G$ is the smallest number $\phi_k(G)$ such that
we can find $k$ disjoint subsets of vertices each of expansion at most $\phi_k(G)$,
$$ \phi_k(G):=\min_{\text{disjoint } S_1,\ldots,S_k\subseteq V} \max_{1\leq i\leq k} \phi(S_i).$$ 

Note that for any graph $G$, $\phi_k(G) \geq \sse_{n/k} (G)$, and $\phi_k(G)\geq \lambda_k/2$ but 
$\lambda_k$ and $\sse_{n/k}(G)$ are incomparable.
Consequently, having a lower-bound on $\phi_k(G)$ is weaker than having lower bounds
on any of $\lambda_k$ or $\sse_{n/k}(G)$\footnote{For example, if $G$ is a complete graph with a separated vertex, then $\phi_k(G)=\Theta(1)$ for any $k\geq 2$ whereas $\sse_s(G)=0$ for any $s\geq 1$.}. We show that if there is a $\tilde{O}(\log k \sqrt{\log n})$ gap between $\phi_k(G)$, and the optimum of ARV relaxation, then the solution of ARV relaxation can be rounded with an approximation factor of $O(\sqrt{\log k})$ in polynomial time.
\begin{theorem}
\label{thm:sselogk}
There are universal constant $c_2,c_3$ and a randomized rounding algorithm such that
for any graph $G$, if $\phi_k(G) > \frac{c_2}{c_3}\cdot  \log k\cdot \sqrt{\log n}\cdot \log\log n\cdot \sdp$, 
then the Arora-Rao-Vazirani relaxation can be rounded with an approximation ratio $O(\sqrt{\log k})$.
Here, $\sdp$ denotes the optimum value of ARV relaxation.
\end{theorem}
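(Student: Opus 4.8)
The plan is to combine two ingredients: a \emph{structure theorem} saying that under the hypothesis on $\phi_k(G)$ an optimal ARV solution is almost entirely covered by $O(k)$ balls of bounded radius, and a \emph{rounding scheme} that converts such a covered solution into a cut with loss $O(\sqrt{\log k})$. So first I would set $\{v_i\}_{i\in V}$ to be an optimal ARV solution, with the usual unit-norm and $\ell_2^2$-triangle-inequality (negative-type) constraints, whose objective $\sum_{(i,j)\in E}\|v_i-v_j\|^2$ equals $r\cdot\sdp\cdot n$ up to normalization. The key point is that if the solution \emph{cannot} be covered by $k$ balls of some radius $\radius$, i.e. if the $\radius$-packing number is at least $k$, then there are $k$ points $u_1,\dots,u_k$ pairwise at distance $>\radius$ in $\ell_2^2$; each ball $B(u_\ell,\radius/2)$ is disjoint, and a Markov/averaging argument together with the ARV objective shows each such ball, viewed as a candidate vertex set, has small expansion — contradicting the definition of $\phi_k(G)$ once $\radius$ is chosen as roughly $\sdp/\phi_k(G)$ times a $\log$-factor. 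This is exactly where the $\sqrt{\log n}\log\log n$ enters: to turn ``ball of small SDP mass leaving it'' into ``vertex set of small expansion'' one needs the ARV/Cheeger-type rounding of Arora–Rao–Vazirani inside the ball, which costs $O(\sqrt{\log n})$, and the $\log\log n$ is the standard overhead from the $\ell_1$-embedding/region-growing step.

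Next, granting the structure theorem, I have a partition (up to an $\eps$-fraction of the mass) of the $v_i$ into $\le k$ clusters $C_1,\dots,C_k$, each of diameter $O(\radius)$ in $\ell_2^2$. Now I would contract each cluster to its centroid-like representative and run the ARV hyperplane-rounding / $\ell_2^2$-into-$\ell_1$ argument on the resulting $k$-point metric: the ARV theorem gives an $\ell_1$ distortion of $O(\sqrt{\log k})$ rather than $O(\sqrt{\log n})$ because there are effectively only $k$ ``well-separated'' points. Concretely, the negative-type constraints survive on the contracted instance, the objective only decreases (edges inside a cluster contribute $O(\radius)$ each, which is absorbed into the slack created by the gap hypothesis), and sparsest-cut $\le O(\sqrt{\log k})\cdot(\text{contracted SDP value})\le O(\sqrt{\log k})\cdot\sdp$. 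The leftover $\eps$-mass of uncovered vertices is handled by folding it into one of the clusters or charging it to $\sdp$ directly, at the cost of adjusting the universal constant $c_3$.

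The main obstacle I expect is the structure theorem, specifically making the ball-to-cut conversion tight enough. The naive bound — a ball of radius $\radius$ out of which the SDP sends mass $\beta$ gives a set of expansion roughly $\beta/(\radius\cdot|S|)$ — is too lossy; one really needs the ARV region-growing argument applied locally, and one must ensure the $k$ sets produced are \emph{disjoint}, which forces the packing radius and the ball radius to differ by a constant factor and requires care that the $\ell_2^2$ triangle inequality is used correctly to guarantee disjointness. A secondary difficulty is bookkeeping the uncovered $\eps$-fraction: if too many vertices escape all $k$ balls one must recurse or argue that they themselves form a low-expansion set, again invoking $\phi_k(G)$. I would structure the argument so that choosing $\radius \asymp \log k\cdot\sqrt{\log n}\cdot\log\log n\cdot \sdp/\phi_k(G)$ makes both the number of balls $\le k$ and the per-edge contraction error $\le \sdp$, after which the $O(\sqrt{\log k})$ rounding on the $k$-cluster instance is essentially the original ARV analysis with $n$ replaced by $k$.
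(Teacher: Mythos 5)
Your high-level plan — prove a structure theorem giving a cover by $O(k)$ small-diameter sets, then do an $O(\sqrt{\log k})$ rounding on the covered solution — is exactly the paper's architecture. But both steps have genuine gaps in the form you propose them.

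\textbf{The structure theorem.} You argue by packing: if there is no $k$-ball cover of radius $\radius$, take $k$ pairwise $\radius$-separated points, grow disjoint balls of radius $\radius/2$, and run ARV/region-growing \emph{inside each ball} to extract $k$ disjoint non-expanding sets. This misses two things. First, nothing guarantees those balls contain many vertices; $\phi(S) = |E(S,\bar S)|/(r|S|)$ can be $\Omega(1)$ for a small $|S|$ even if the local SDP mass is tiny, so a sparse packing does not yield a contradiction to $\phi_k$. The paper's fix is essential: it partitions \emph{all} vertices with a $\radius$-bounded Lipschitz random partition, then greedily merges the small cells so that every one of the $2k$ final pieces has $\geq \eps n/4k$ vertices, after which Markov on the total number of cut edges (controlled by the Lipschitz property) gives $k$ pieces of small expansion. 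Second, the $\sqrt{\log n}\log\log n$ in the hypothesis does not arise from running ARV locally; the paper first applies the Arora--Lee--Naor theorem to embed the $\ell_2^2$ metric $\sd_\bx$ into an honest metric $\hd_\by$ with distortion $O(\sqrt{\log n}\log\log n)$ (this converts ``mass $\|\bx_u-\bx_v\|^2$'' into ``length $\|\by_u-\by_v\|$'' so that the Lipschitz cut-probability bound lines up with the SDP objective rather than its square root), and then reduces to $O(\log k)$ dimensions so the partitioning loss is polylogarithmic in $k$ instead of in $n$. Your proposal omits both the ALN pre-embedding and the dimension reduction, yet these are exactly the steps that produce the $\sqrt{\log n}\log\log n$ and $\log k$ factors in the hypothesis.

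\textbf{The rounding.} The ``contract each cluster to a representative and re-run ARV on the $k$-point metric'' step is where the proposal actually fails. For an inter-cluster edge $(u,v)$ with $u\in A_i$, $v\in A_j$, the $\ell_2^2$ triangle inequality gives only $\|\bx_{a_i}-\bx_{a_j}\|^2 \leq \|\bx_u-\bx_v\|^2 + 2\radius$, so the contracted objective is bounded by $\sdp + \Theta(\radius)$, not by $\sdp$. With $\radius = \Theta(1/\sqrt{\log k})$ as you propose, running ARV on the contracted instance would yield expansion $O(\sqrt{\log k}\cdot\sdp) + O(1)$, which is useless when $\sdp \ll 1/\sqrt{\log k}$. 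The paper avoids contraction entirely: it uses a weighted Arora--Lee--Naor lemma only to locate a good Frechet set $U\subseteq C$ among the $2k$ representatives, but then evaluates both sides of \autoref{fact:sepset} with the \emph{original} vectors $\bx_v$. The numerator $\sum_{(u,v)\in E}|\sd_\bx(u,U)-\sd_\bx(v,U)|$ is then bounded by $\sum_{(u,v)\in E}\sd_\bx(u,v) = O(rn\cdot\sdp)$ with no $\radius$ loss, while the approximation error from replacing each $u$ by $c(u)$ appears only in the denominator as $|A|\sum_{v\in A}\sd_\bx(v,c(v)) \leq \radius n^2$, which is compared against $n^2/\sqrt{\log k}$ rather than against $\sdp\cdot n^2$. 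That asymmetry — clustering only steers the choice of $U$, it never changes the objective — is the key device your proposal lacks, and without it the $O(\sqrt{\log k}\cdot\sdp)$ guarantee does not go through.
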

Note that \autoref{thm:sselogk} is not necessarily stronger than \autoref{thm:ranklogk} because
the gap between $\phi_k(G)$ and $\sdp$ is a function of $n$ that can be significantly larger than $k$.

In addition to the above results we can obtain a constant factor approximation algorithm
if  we are given a feasible solution of the $k$-th round of the Sherali-Adams hierarchy 
starting at the Arora-Rao-Vazirani relaxation. 
\begin{theorem}
\label{thm:ranksseconst}
There are universal constants $c_1,c_2,c_4$ and a randomized rounding algorithm
such that for any graph $G$, if $\lambda_k(G) > \frac{c_1}{c_4}\cdot \log^2 k \cdot \sdpSA$
or if $\phi_k(G) > \frac{c_2}{c_4}\cdot \sqrt{\log k \log n}\cdot \log\log n\cdot \sdpSA$, then
  the $2k$-th round of the Sherali-Adams hierarchy
of the Arora-Rao-Vazirani relaxation can be rounded with a constant factor approximation.
Here, $\sdpSA$ denotes the optimum value of the $2k$ rounds of Sherali-Adams hierarchy applied to ARV relaxation. 
\end{theorem}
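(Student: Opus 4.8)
The plan is to obtain this theorem by composing the two ingredients advertised in the introduction --- the structure theorem for feasible ARV solutions on low threshold rank graphs and small-set expanders, and the rounding scheme for solutions that are ``covered by a few balls'' --- in the Sherali--Adams version of the latter. First I would take an optimal solution of the $2k$-th round of the Sherali--Adams hierarchy applied to the ARV relaxation and extract its underlying ARV vectors $\{x_v\}_{v\in V}$: these are feasible for the plain ARV relaxation and their ARV objective equals $\sdpSA$. Since strengthening a relaxation of a minimization problem only raises its optimum, $\sdp \le \sdpSA = O(\phi(G))$, so it suffices to produce in polynomial time a cut $S^*$ with $|E(S^*,\bar S^*)| = O(\sdpSA)\cdot rn$ and $\min(|S^*|,|\bar S^*|) = \Omega(n)$; the smaller side of such a cut is a constant-factor approximation of $\phi(G)$.

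Next I would feed the vectors $\{x_v\}$ to the structure theorem. The hypothesis $\lambda_k > (c_1/c_4)\log^2 k\cdot \sdpSA$ is exactly a $\log^2 k$ gap between $\lambda_k$ and the ARV value of this solution, and $\phi_k(G) > (c_2/c_4)\sqrt{\log k\log n}\cdot\log\log n\cdot\sdpSA$ is the analogous gap in the small-set-expander case; in either case the structure theorem certifies that all but a negligible fraction of the SDP mass of $\{x_v\}$ lies in a union of at most $k$ balls of some absolute constant radius $\delta_0$ (with $c_1,c_2$ the structure-theorem constants and $c_4$ absorbing the loss of the rounding step). We only use the constant-radius form of the structure theorem here, which is why the required gap is $\log^2 k$ and not the $\log^{2.5} k$ of \autoref{thm:ranklogk}: there the ARV rounding needs balls of radius $O(1/\sqrt{\log k})$, at the cost of an extra $\sqrt{\log k}$.

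Finally I would run the Sherali--Adams rounding scheme on this $k$-ball configuration. The balls induce a partition of almost all of $V$ into clusters $C_1,\dots,C_k$; pick one representative per cluster. Since $k\le 2k$, the Sherali--Adams solution restricted to the $k$ representatives is a genuine distribution over $\{0,1\}^k$; sample $(b_1,\dots,b_k)$ from it, put each $C_i$ on side $b_i$, and merge the negligible leftover mass into an incident cluster rather than cutting its edges. The expected cut size is then controlled once one shows that for an edge $(u,v)$ with $u\in C_i$, $v\in C_j$, the probability that the representatives of $C_i$ and $C_j$ fall on opposite sides is $O(\norm{x_u-x_v}^2)$ up to a term governed by $\sdpSA$; the spreading constraints of ARV, which persist in the Sherali--Adams solution, give $\min(|S^*|,|\bar S^*|)=\Omega(n)$.

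The main obstacle is this last step, and the crux is the constant radius $\delta_0$: the representatives of $C_i$ and $C_j$ can be at squared distance $\norm{x_u-x_v}^2+\Theta(\delta_0)$, and naively this inflates the rounded cut by $\Theta(\delta_0^2)\cdot|E|$, which swamps the target $O(\sdpSA)\cdot|E|$. Taming it is precisely what the extra Sherali--Adams levels (``$2k$'' rather than ``$k$'') buy: one conditions on the $k$ representatives and argues that in the conditional relaxation --- still a valid Sherali--Adams relaxation --- each cluster has collapsed enough that both within- and between-cluster edge contributions are charged to the conditional ARV objective and not to $\delta_0$, while a careful notion of ``covered'' keeps the merged-in mass harmless. (The $\sqrt{\log n}\,\log\log n$ factor in the small-set-expander case is inherited verbatim from the structure theorem, whose small-set-expander version establishes the ball covering via an ARV-type argument.) The remaining steps --- the reduction to a cut of expansion $O(\sdpSA)$, the lower bound on $|S^*||\bar S^*|$, and tracking the constants $c_1,c_2,c_4$ --- are routine once this is settled.
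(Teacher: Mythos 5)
Your high-level plan (apply the structure theorems to the underlying ARV vectors, then use a Sherali--Adams--specific rounding for the resulting cover by $O(k)$ constant-radius balls) is the same as the paper's, and your observation about why the gap here is $\log^2 k$ rather than $\log^{2.5} k$ is exactly right: the Sherali--Adams rounding only needs constant-radius balls, so the structure theorem is invoked with $\delta = c_4$ instead of $\delta = c_3/\sqrt{\log k}$. The small point about ``$2k$'' is off, though: the $2k$-th Sherali--Adams round is needed simply because the structure theorems produce a cover by $2k$ balls (not $k$), so the set $R$ of representatives has size $2k$; it is not buying you any extra conditioning.

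The rounding mechanism you propose is not the paper's and, as far as I can see, does not work. You sample $b:C\to\{0,1\}$ from the Sherali--Adams marginal and put each cluster $C_i$ on side $b(a_i)$, and then need $\pr[b(a_i)\neq b(a_j)] = O(\|x_u-x_v\|^2)$ for edges $(u,v)$ between $C_i$ and $C_j$. The Sherali--Adams constraints give only $\E[\mathcal D(a_i,a_j)]=\|x_{a_i}-x_{a_j}\|^2$ with $\mathcal D(a_i,a_j)=s_b\cdot\I{b(a_i)\neq b(a_j)}$ for a $b$-dependent scale $s_b>0$; nothing pins down $s_b$, so the cut probability is not controlled. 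You correctly flag this as the crux --- the constant radius $\delta_0$ lets $\|x_{a_i}-x_{a_j}\|^2$ be $\Theta(\delta_0)$ even for short edges --- but the proposed fix (``condition on the representatives so that each cluster has collapsed'') is not spelled out, and I don't see how conditioning can repair the missing scale control. The paper instead samples the \emph{metric} $\mathcal D$ (not merely $b$), shows that with constant probability $\sum_{E}\mathcal D(u,v)\lesssim \sum_E\|x_u-x_v\|^2$, $\sum_{u,v\in A}\mathcal D(u,v)\geq n^2/64$, and $\sum_{u\in A}\mathcal D(u,c(u))\leq O(\delta_0)|A|$, and then rounds via the one-dimensional Frechet embedding $f(v):=\mathcal D(a_1,v)$ with a sweep cut. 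Because $|f(u)-f(v)|\leq\mathcal D(u,v)$ by the triangle inequality, the edge contribution is controlled \emph{regardless of $\delta_0$}; the lower bound $\sum_{u,v\in A}|f(u)-f(v)|\geq\Omega(n^2)$ then follows from the fact that $\mathcal D$ restricted to $C$ is a cut metric plus the bound on $\sum_{u\in A}\mathcal D(u,c(u))$. So the paper's route never meets the obstacle you're trying to fight. (Also, the balancedness claim needs the preliminary Leighton--Rao / ARV preprocessing step extracting a $(4,1/16)$-well-spread set $W$, which you should intersect with the ball cover; invoking ``the spreading constraints of ARV'' alone is not quite enough.)
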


The above result is weaker than results of Arora, Ge, Guruswami and Sinop. In \cite{GS13}, Guruswami and Sinop achieve an arbitrarily good approximation, instead of a constant factor approximation, assuming that the ratio between $\lambda_k$ and the optimum of the relaxation is a sufficiently large constant, while we need the ratio to be of the order of $\log^2 k$. In \cite{AGS13}, Arora, Ge and Sinop achieve an arbitrarily good approximation, instead of a constant factor approximation, assuming that the ratio between $\sse_{n/k}(G)$ and the optimum of the relaxation is at least order
of $\sqrt{\log k \log n}$, while we need it to be order of $\sqrt{\log k \log n}\log\log n$. On the other hand, our results hold for the weaker Sherali-Adams relaxation, are proved via rather different techniques and, in the case of the second result, are based on an assumption on $\phi_k(G)$ rather than $\sse_{n/k}(G)$.


%
%

\subsection{Structure Theorems and Improved Rounding}
A feasible
solution of the ARV relaxation for a graph $G=(V,E)$ is an assignment of a vector $\bx_v$
to every vertex $v\in V$. The vectors are normalized so that $\sum_{u,v\in V}  \norm{\bx_u - \bx_v}^2 = n^2$ and the
cost of the solution is
\[ \sdp:= \frac 1{2r\cdot n} \cdot {\sum_{(u,v)\in E} \norm{ \bx_u-\bx_v}^2} \]
where $r$ is the degree. (Furthermore, the distance function $d_\bx(u,v) := \norm{\bx_u-\bx_v}^2$ satisfies
the triangle inequality.) 

Our proofs follow from structure theorems on feasible solutions of ARV relaxation. We show that if $\lambda_k$ or $\phi_k$ are sufficiently larger than $\sdp$, then the SDP solution  
can be almost entirely covered by a small number (at most $2k$) of balls of small radius. 
Then, we use this property to design  improved rounding algorithms for graphs with moderately growing spectrum or small set expanders. 

Our structure theorems are in the spirit of on a recent work of Kwok et al. \cite{KLLOT13}.
They show that the nearly-linear time ``sweep'' algorithm based on the eigenvector of the second smallest
eigenvalue of $\L$ has approximation ratio $O(k/\sqrt {\lambda_k})$. Their paper
is based on a structure theorem showing that the eigenvector of the second eigenvalue
of the Laplacian must be at distance at most $O(\sqrt{\lambda_2/\lambda_k})$ from a vector
whose entries contain only $2k$ distinct values.

We show a similar structure theorem for the solutions of ARV relaxation.
Our structure theorems show that there are universal constants $c_1,c_2$ such that for any $0<\radius <1$, if $\lambda_k \geq c_1\cdot  \log^2k \cdot \sdp/\radius$ or if $\phi_k(G) \geq c_2 \cdot \sqrt{\log n\log k}\cdot \log\log n\cdot  \sdp /\radius$, then 
we can find $2k$ sets $T_1,\ldots,T_{2k}$ of diameter 
(according to the distance $\d_\bx(u,v) := \norm{\bx_u - \bx_v}^2$) at most $\radius$ covering
almost all vertices of $G$. 

If the solution of ARV relaxation has exactly $2k$ points, then we can give a $O(\sqrt{\log k})$
approximation using ARV rounding, or a constant factor approximation using $2k$-levels of Sherali-Adams hierarchy ($2k$-levels of Sherali-Adams hierarchy enforces that on
every $2k$ vertices we have an integral cut). Our improved rounding algorithms provide a robust version of these facts.

Our improved rounding scheme for ARV shows that if we can cover almost all of the vertices 
by $2k$ sets of diameter $\radius$, then provided that $\radius \leq c_3/\sqrt{\log k}$, where $c_3$ is a universal constant, 
we can find in polynomial time a cut of expansion at most $O(\sqrt{\log k} \cdot \sdp)$.
This proves \autoref{thm:ranklogk} and \autoref{thm:sselogk}.

If the solution $\{\bx_v \}_{v\in V}$ as above is feasible for $2k$ rounds of Sherali-Adams applied to ARV,
then, provided $\radius < c_4$, where $c_4 >0$ is a universal constant, we can find in polynomial time a solution of cost at most $O(\sdp)$. This proves \autoref{thm:ranksseconst}

\subsection{Techniques}
\subsubsection{The Structure Theorems}

To prove our structure theorems, our general idea is to divide the ambient space of the vectors $\bx_v$ into regions of small
diameters, and to look at the $2k$ regions with the most elements of the well-spread set guaranteed by
the argument of Leighton and Rao. If such regions cover almost all the vertices in the well-spread set, then
the covered vertices are themselves a well-spread set, and we are done. Otherwise, we find either $k$
disjointly-supported functions $f_i : V \to \R$ all of small Rayleigh quotient, leading to a contradiction to
the assumption on $\lambda_k$, or $k$ disjoint sets each of small expansion, leading to a contradiction
to the assumption on $\phi_k$.

To implement this idea in the case in which the assumption is on $\lambda_k$, we use, as in \cite{LOT12}, a
``padded decomposition'' of the ambient space of the vector. The problem with low-diameter padded decomposition, 
however, is that, if $h$ is the dimension of the space, we will lose a factor of $h^2$ in the Rayleigh quotient (see \autoref{prop:partitioninglambda}).
The ARV solution $\{ \bx_v \}_{v\in V}$ that we start from may lie in a $n$-dimensional space, and so we would need
an assumption of the form $\sdp >  n^2 \cdot \lambda_k$ in order to prove our result. To avoid this loss,
we first map our solution $\{ \bx_v \}_{v\in V}$ to an $O(\log k)$-dimensional space. The resulting solution
$\{ \bz_v \}_{v\in V}$ may not satisfy the triangle inequality any more, but it has approximately the same cost
as the solution $\{ \bx_v \}_{v\in V}$, and for all but a $o(1/k^2)$ fraction of the pairs $u,v$ the
distances $\norm{\bx_u - \bx_v}^2$ and $\norm{\bz_u - \bz_v}^2$ are approximately the same. (But, we emphasize again,
the distances $\norm{\bz_u-\bz_v}^2$ may not satisfy the triangle inequality.) Using a padded low-diameter
decomposition, and the assumption $\lambda_k > c_1 \cdot \log^2 k\cdot \sdp/\radius$, we are then able to cover almost
all of the vertices using $2k$ regions that have diameter at most $\radius/2$ according to the distances
$\norm{\bz_u - \bz_v}^2$. We can then argue that we can also cover almost all the points using $2k$ sets
whose diameter is at most $\radius$ according to $\norm{\bx_u - \bx_v}^2$, or else we have $\Omega(n^2/k^2)$ pairs
$(u,v)$ such that the dimension-reduction distorted their distance by a constant factor, which happens with low
probability.

To cover a well-spread set under the assumption on $\phi_k$ we have to overcome an additional problem:
each edge will be cut with probability proportional to $\norm{\bz_u-\bz_v} \approx \norm{\bx_u - \bx_v}$ in the low-diameter decomposition, but
it contributes only $\norm{\bx_u - \bx_v}^2$ to the cost $\sdp$ of the solution $\{ \bx_v \}_{v\in V}$, so that the fraction of cut
edges (and consequently the assumption on $\phi_k$) would be proportional on $\sqrt{\sdp}$ rather than $\sdp$.
We resolve this difficulty by first using a result of Arora, Lee and Naor \cite{ALN05} to map the solution $\{ \bx_v \}$ to vectors $\{ \by_v \}_{v\in V}$ such that for {\em all} pairs $(u,v)$ we have $\norm{ \bx_u -\bx_v}^2 \leq \norm{\by_u-\by_v} \leq \tilde O(\sqrt {\log n})
\cdot \norm{\bx_u - \bx_v}^2$. Then we map to a solution $\{ \bz_v\}_{v\in V}$ in a $O(\log k)$-dimensional space such that
for all but $o(n^2/k^2)$ pairs $u,v$ we have $\norm{\by_u-\by_v} \leq 2 \norm{\bz_u-\bz_v}$. Using this solution, we
 use the assumption $\phi_k > c_2 \cdot \sqrt {\log n\log k} \cdot \log\log n\cdot \sdp/\radius$ to find $2k$
regions of diameter $\radius/2$, according to the distances $\norm{\bz_u-\bz_v}$ that cover almost all the vertices. Reasoning as before, this means that we also have $2k$
sets of diameter $\radius$, according to the distances $\norm{\by_u-\by_v}$ that cover almost all the vertices. Since $\norm{\bx_u-\bx_v}^2\leq \norm{\by_u-\by_v}$ for all pairs $u,v\in V$, the diameter of the $2k$ sets is at most $\radius$ according to the distances $\norm{\bx_u-\bx_v}^2$. 

\subsubsection{The Rounding Schemes}

An argument that
goes back to Leighton and Rao \cite{LR99} shows that either a simple rounding algorithm succeeds in finding
a cut of expansion at most $4\sdp$, or else there is a set $A\subseteq V$ that is $(4,1/16)$-{\em well-spread}.

\begin{definition} Let $\{ \bx_v \}_{v\in V}$ be an assignment of vectors to vertices such 
that $\sum_{u,v} \norm{\bx_u-\bx_v}^2 = n^2$. We say that a set $A\subseteq V$ is $(\alpha,\beta)$-well-spread if for all $u,v\in A$,
\begin{eqnarray*}
 \norm{\bx_u - \bx_v}^2 &\leq& \alpha, \\
 \sum_{u,v \in A} \norm{\bx_u - \bx_v}^2 &\geq& \beta \cdot n^2. 
\end{eqnarray*}
\end{definition}


Let $\{\bx_v\}_{v\in V}$ be  a feasible solution to the ARV relaxation, and let $A$ be a $(O(1),\Omega(1))$-well-spread
set of vertices. 
For a set $U\subseteq V$, and $v\in V$, let $\d_\bx(v,U):=\min_{u\in U} \norm{\bx_u-\bx_v}^2$.
If we can find a set  $U \subseteq A$ such that 
$\sum_{u,v} |\d_\bx(u,U)-\d_\bx(v,U)| \geq n^2/\radius$,
then we can find a set of expansion at most $O(\sdp/\radius)$.
Arora, Rao and Vazirani \cite{ARV04} show that such set can always be found, with $\radius= \Omega(1/\sqrt {\log n})$.

Using a refinement of the result of \cite{ARV04} proved by Arora, Lee and Naor \cite{ALN05}, we can show
that if the solution is such that the well-spread set $\{\bx_v\}_{v\in A}$ occupies only $k$ distinct points, then we can have $\radius=\Omega(1/\sqrt {\log k})$,
and that this remains true if $\{\bx_v\}_{v\in A}$ can be covered by $k$ sets of diameter $o(1/\sqrt{\log k})$.

Our other rounding scheme assumes that $\{\bx_v\}_{v\in V}$ is (part of) a solution that is feasible
for $2k$ rounds of Sherali-Adams applied to ARV. If a well-spread set $A$ is covered by $2k$ low-diameter
sets $A_1,\ldots,A_{2k}$, let $C=\{ a_1,\ldots,a_{2k}\}$ be a set of $2k$ vertices, with each $a_i\in A_i$ chosen 
arbitrarily from each region. The feasibility for $2k$ rounds of Sherali Adams implies that there is a
probability distribution over metrics $\cD(\cdot,\cdot)$ (described as part of the feasible solution) such that for every pair $(u,v)$ we 
have $\norm{\bx_u-\bx_v}^2 = \E[\cD(u,v)]$, and we also have with probability 1 that $\sum_{u,v} \cD(u,v) = \sum_{u,v} \norm{\bx_u-\bx_v}^2 = n^2$ and that there is a partition $C_L,C_R$ of $C$ such that $\cD(a_i,a_j)$ is zero if $a_i,a_j$  are both in $C_L$
or both in $C_R$, and it is the same positive value for all pairs in $C_L \times C_R$. (That is, with probability 1, $\cD(\cdot,\cdot)$ is a cut metric on $C$.) Overall, sampling from this distribution, we find, with constant probability, a
distance function $\cD$ such that
\begin{eqnarray*}
\sum_{(u,v)\in E} \cD(u,v) &\leq& O(1) \cdot \sum_{(u,v)\in E} \norm{\bx_u - \bx_v}^2 \\
\sum_{u,v \in A} \cD(u,v) &\geq& \Omega( n^2)\\
\sum_{u\in A} \cD(u,C) &\leq& O(\delta\cdot n)
\end{eqnarray*}
which is enough, for small enough constant $\delta$, to  round the solution and find
a cut of expansion at most $O(\sdp)$.

\section{Preliminaries}
\subsection{SDP Relaxations}
The Arora-Rao-Vazirani relaxation is defined as follows; the variables $\bx_v$, one
for each vertex, are vectors.

\begin{equation}
\begin{aligned}
\mbox{min \ \ \ \ \ \ \ \ } & \frac1{2r\cdot n}\sum_{(u,v)\in E} \norm{\bx_u - \bx_v}^2\\
\mbox{subject to\ }
& \sum_{u,v\in V} \norm{\bx_u - \bx_v}^2 = n^2\\
& \norm{\bx_u - \bx_v}^2 \leq \norm{\bx_u - \bx_w}^2 + \norm{\bx_w - \bx_v}^2 & \forall u,v,w\in V 
\end{aligned}
\label{eq:arv}
\end{equation}

\bigskip
Next, we show that this SDP is indeed a relaxation for the uniforms sparsest cut problem.
Let $S\subseteq V$ with size $s:=|S|$ be the optimum solution, i.e., $0<s\leq n/2$ and $\phi(S)=\phi(G)$.
Then, for every $v\in V$, we define
\begin{equation}
\label{eq:integralsdp}
\bx_v=
\begin{cases}
\sqrt{\frac{n^2}{s(n-s)}} & \text{if } v\in S,\\
0 & \text{otherwise.}
\end{cases}
\end{equation}
Since the above assignment defines a cut, it satisfies the triangle inequality. Furthermore, since $s\leq n/2$,
$$ \frac{1}{2r\cdot n} \sum_{(u,v)\in E} \norm{\bx_u-\bx_v}^2 = \frac{1}{2r\cdot n}|E(S,\overline{S})| \frac{n^2}{s(n-s)} \leq \phi(S).$$
Therefore, \eqref{eq:arv} is a relaxation of the uniform sparest cut problem.

The following relaxation is a simplification of the $k$-th round Sherali-Adams strengthening of ARV.
There is a vector variable $\bx_v$ for every vertex $v\in V$, then there are also $2^k \cdot {n \choose k}\cdot {n\choose 2}$ additional real
variables $\d^{R,b}(u,v)$, one for every two vertices $u,v\in V$, every subset $R\subseteq V$ of
cardinaility $k$, and every map $b: R \to \B$, and finally there are $2^k\cdot {n\choose k}$ real variables
$p^{R,b}$. 

\begin{equation}
\begin{aligned}
\mbox{min  \ \ \  \ \ \ \ \ } & \frac{1}{2r\cdot n}\sum_{(u,v)\in E} \norm{\bx_u - \bx_v}^2\\
\mbox{subject to }
& \sum_{u,v\in V} \norm{\bx_u - \bx_v}^2 = n^2\\
& \norm{\bx_u - \bx_v}^2 \leq \norm{\bx_u - \bx_w}^2 + \norm{\bx_w - \bx_v}^2 & \forall u,v,w\in V \\
\\
& \sum_{u,v\in V} \d^{R,b} (u,v) = n^2\cdot p^{R,b} & \forall R,b\\
&\d^{R,b} (u,v) \leq \d^{R,b}(u,w) + \d^{R,b} (w,v) & \forall u,v,w\in V \\
\\
& \norm{\bx_u-\bx_v}^2 = \sum_{b:R\rightarrow \B}\d^{R,b} (u,v) & \forall u,v\in V \forall R\\
& \sum_{b:R\rightarrow \B} p^{R,b} = 1 & \forall R\\
& p^{R,b} \geq 0 & \forall R,b\\
& \d^{R,b} (u,v) = 0 & \forall u,v \forall R,b. \ b(v)=b(u)\\
\end{aligned}
\label{eq:arvk}
\end{equation}

The relaxation \eqref{eq:arvk} can be interpreted in the following way. For every subset $R\subseteq V$ of $k$
vertices, we have a probability distribution $p^{R,b}$ over the possible cuts $b:R \to \{0,1\}$ of $R$. 
For each such $R$ and $b$, we have a distance function $\d^{R,b}$ that satisfies the triangle inequality,
and which has the following property. Fix any $R$, and consider the following distribution on metrics: pick a random $b: R \to \{0,1\}$ with
probability $p^{R,b}$, and define ${\mathcal D}(u,v):= \frac {1}{p^{R,b}} \d^{R,b} (u,v)$. Then each $\cD(\cdot,\cdot)$ in this
sample space satisfies the triangle inequality and, for all pair of vertices $u,v\in V$,
\begin{eqnarray}
\label{eq:expdistance}
\E[\cD(u,v)] &=& \norm{\bx_u-\bx_v}^2,\\
\label{eq:sumdistance}
\sum_{u,v} \cD(u,v) &=& n^2.
\end{eqnarray}
Furthermore, $\cD(\cdot,\cdot)$
defines a cut metric over $R$.

Next, we show that \eqref{eq:arvk} is a relaxation of the uniform sparsest cut problem.
Let $S$ be the optimum solution. Then, we assign the same value as in \eqref{eq:integralsdp} to each vector $\bx_v$. For each set $R\subset V$, let $b_R: R\rightarrow \{0,1\}$ be the indicator of $S$ in $R$, that is for any $v\in R$,
$$ b_R(v):=\begin{cases}
1 & \text{if } v\in S,\\
0 & \text{otherwise}.
\end{cases}$$
Then, for any $b:R\rightarrow \{0,1\}$, we let $p^{R,b} := \I{b=b_R}$. This defines a probability
distribution for each set $R$.
Furthermore, for any $u,v\in V$ let $d^{R,b}(u,v) := \I{b=b_R} \norm{\bx_u-\bx_v}^2$.
It is easy to see this satisfies all of the constraints of \eqref{eq:arvk}.
\subsection{Notations}
Let  $\{\bx_v\}_{v\in V}\in \mathbb{R}^m$ be be a sequence of vectors 
assigned to the vertices of $G$. In this paper we will use two types 
of distance functions on these vectors. For any pair of vertices $u,v\in V$, we use
\begin{eqnarray*} 
\sd_{\bx} (u,v) &:=& \norm{\bx_u-\bx_v}^2\\
\hd_{\bx}(u,v) &:=& \norm{\bx_u-\bx_v}.
\end{eqnarray*}
Note that $\hd_\bx$ is just the Euclidean distance, and is always a metric, but $\sd_\bx$ is not necessarily a metric. In fact,
if $\sd_\bx$ is a metric, we say vectors $\{\bx_v\}_{v\in V}$ form a {\em negative type} metric.

For a distance function $d(.,.)$, we define the energy of the graph 
as follows:
\begin{eqnarray*} 
{\mathcal E} (d) &:=& \frac{1}{2r\cdot n}\sum_{(u,v)\in E} \d(u,v),  \\
\end{eqnarray*}
For a distance function $d(.,.)$, and a set $S\subseteq V$, the diameter of $S$ is the maximum distance between the vertices of $S$,
$$ \diam(S,d):=\max_{u,v\in S} d(u,v).$$
For every $u\in V$, let
$$ \d(u,S) := \min_{v\in S} \d(v,u).$$
For any $u\in V$ and $r>0$, the ball of radius $r$ about $u$ is the set of vertices at distance at most $r$ from $u$,
$$ B_\d(u,r):=\{v: \d(u,v)\leq r\}$$

\subsection{Spectral Graph Theory}

For a $r$-regular graph $G=(V,E)$, the Rayleigh quotient of a function $f:V\rightarrow \mathbb{R}$, is defined
as follows,
$$ \cR(f):=\frac{\sum_{(u,v)\in E} (f(u)-f(v))^2}{r\cdot \sum_{v\in V} f^2(v)}.$$
A proof of the following statement can be found in \cite[Lem 2.3]{KLLOT13}.
\begin{fact}
\label{fact:lambda}
For any graph $G$, and any $k$ disjointly supported functions $f_1,\ldots,f_k: V\rightarrow \mathbb{R}$,
$$\lambda_k \leq 2\max_{1\leq i\leq k} \cR(f_i).$$
\end{fact}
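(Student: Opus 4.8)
\textbf{Proof plan for Fact~\ref{fact:lambda}.} The plan is to use the variational (Courant--Fischer) characterization of $\lambda_k$: if $0=\mu_1\le\mu_2\le\cdots\le\mu_n$ are the eigenvalues of $\L=I-A/r$ with orthonormal eigenvectors $\psi_1,\ldots,\psi_n$, then for any $k$-dimensional subspace $W\subseteq\R^n$ we have $\lambda_k\le\max_{0\ne g\in W}\tfrac{\langle g,\L g\rangle}{\langle g,g\rangle}$, and moreover $\langle g,\L g\rangle=\tfrac1r\sum_{(u,v)\in E}(g(u)-g(v))^2$, so that $\tfrac{\langle g,\L g\rangle}{\langle g,g\rangle}=\cR(g)$ for every nonzero $g$. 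Thus it suffices to exhibit a $k$-dimensional subspace $W$ on which the Rayleigh quotient is everywhere at most $2\max_i\cR(f_i)$. The natural candidate is $W=\mathrm{span}\{f_1,\ldots,f_k\}$, which is $k$-dimensional since the $f_i$ have pairwise disjoint supports and are nonzero (hence linearly independent).

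First I would take an arbitrary $g=\sum_{i=1}^k c_i f_i\in W$ and compute both the numerator and denominator of $\cR(g)$. Because the supports $\supp(f_i)$ are pairwise disjoint, $\sum_{v}g(v)^2=\sum_i c_i^2\sum_v f_i(v)^2$, so the denominator splits cleanly. The numerator $\sum_{(u,v)\in E}(g(u)-g(v))^2$ does not split exactly, because an edge $(u,v)$ with $u\in\supp(f_i)$ and $v\in\supp(f_j)$ for $i\ne j$ contributes $(c_i f_i(u)-c_j f_j(v))^2$, a cross term. The key step is to bound this: by the inequality $(a-b)^2\le 2a^2+2b^2$, each such edge contributes at most $2c_i^2 f_i(u)^2+2c_j^2 f_j(v)^2$, and edges internal to a single support $\supp(f_i)$ contribute exactly $c_i^2(f_i(u)-f_i(v))^2\le 2c_i^2(f_i(u)-f_i(v))^2$. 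Summing over all edges and regrouping, every edge incident to $\supp(f_i)$ contributes at most $2c_i^2$ times the corresponding squared-difference term for $f_i$ (with the convention $f_i\equiv 0$ off its support), so $\sum_{(u,v)\in E}(g(u)-g(v))^2\le 2\sum_i c_i^2\sum_{(u,v)\in E}(f_i(u)-f_i(v))^2$.

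Combining the two bounds gives
\[
\cR(g)=\frac{\sum_{(u,v)\in E}(g(u)-g(v))^2}{r\sum_v g(v)^2}
\le\frac{2\sum_i c_i^2\sum_{(u,v)\in E}(f_i(u)-f_i(v))^2}{r\sum_i c_i^2\sum_v f_i(v)^2}
\le 2\max_{1\le i\le k}\cR(f_i),
\]
using the elementary fact that a ratio of sums $\frac{\sum_i a_i}{\sum_i b_i}$ with $b_i>0$ is at most $\max_i\frac{a_i}{b_i}$. Since this holds for every nonzero $g\in W$ and $\dim W=k$, Courant--Fischer yields $\lambda_k\le 2\max_i\cR(f_i)$, as claimed.

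The only mild subtlety I anticipate is bookkeeping in the cross-term estimate: one must make sure that when an edge straddles two supports the mass $2c_i^2 f_i(u)^2$ is correctly attributed to the ``$f_i$ Rayleigh sum'' $\sum_{(u,v)\in E}(f_i(u)-f_i(v))^2$ (here using that $f_i(v)=0$, so $f_i(u)^2=(f_i(u)-f_i(v))^2$ for that edge), and that no edge's contribution is double-counted within a single $f_i$'s sum. This is the ``main obstacle,'' though it is entirely routine; everything else is the standard Courant--Fischer argument. (Alternatively, one can appeal directly to the cited statement in \cite[Lem 2.3]{KLLOT13}.)
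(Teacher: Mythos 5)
Your proof is correct, and it is the standard argument: the paper itself gives no proof of Fact~\ref{fact:lambda}, deferring instead to \cite[Lem.~2.3]{KLLOT13}, whose proof is exactly the Courant--Fischer argument on $W=\mathrm{span}\{f_1,\ldots,f_k\}$ together with the $(a-b)^2\le 2a^2+2b^2$ cross-term bound and the mediant inequality that you carry out. The bookkeeping you flag as the ``main obstacle'' is handled correctly (each straddling edge's contribution is charged once to each of the two relevant $f_i$-sums, never double-counted within one), and you also correctly note the implicit requirement that the $f_i$ be nonzero so that $W$ has dimension $k$.
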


\subsection{Random partitions of metric spaces}
\label{sec:rps}

We now discuss some of the theory of random partitions
of metric spaces.
Let $d(.,.)$ be a metric defined on the vertices of $G$. 
We will write a partition $P$ of $V$ as a
function $P : V \to 2^V$ mapping a vertex $v \in V$
to the unique set in $P$ that contains $v$.

For $\radius > 0$, we say that $P$ is {\em $\radius$-bounded} if
$\diam(S,d) \leq \radius$ for every $S \in P$.
We will also consider distributions over random partitions.
If $\mathcal P$ is a random partition of $V$, we say that $\mathcal P$
is $\radius$-bounded if this property holds with probability one.

A random partition $\mathcal P$ is {\em $(\radius, \alpha, \eps)$-padded} if
$\mathcal P$ is $\radius$-bounded, and
for every $v \in V$, we have
$$
\pr[B_d(v, \radius/\alpha) \subseteq \mathcal P(v)] \geq \eps.
$$
A random partition is {\em $(\radius,L)$-Lipschitz} if $\mathcal P$ is $\radius$-bounded,
and, for every pair $u,v \in V$, we have
$$
\pr[\mathcal P(u) \neq \mathcal P(v)] \leq L \cdot \frac{d(u,v)}{\radius}\,.
$$

Here are some results that we will need.
The first theorem is known, more generally, for doubling spaces \cite{GKL03},
but here we only need its application to $\mathbb R^k$.
See also \cite[Lem 3.11]{LN05}.

\begin{theorem}\label{thm:rkpad}
If vertices are mapped to $\mathbb{R}^k$, and $d(.,.)$ is the Euclidean distance between the vertices, then for every $\radius > 0$ and $\eps > 0$, $(V,d)$ admits a $(\radius, O(k/\eps), 1-\eps)$-padded random partition.
\end{theorem}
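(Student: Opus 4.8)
The plan is to construct the partition on all of $\R^k$ and then restrict it to $V$; this is harmless, since restricting a partition only shrinks clusters and balls, so a $\radius$-bounded $(\radius,\alpha,1-\eps)$-padded random partition of $\R^k$ induces one of $(V,d)$. It is worth noting up front that the most obvious candidate, a randomly shifted axis-parallel grid of side length $\radius/\sqrt k$ (so that cells have diameter $\radius$), only gives $\alpha=O(k^{3/2}/\eps)$: a ball of radius $r$ is cut along each of the $k$ coordinates with probability $\approx r\sqrt k/\radius$, for a total of $\approx k^{3/2}r/\radius$. To remove the extra $\sqrt k$ I would use the Calinescu--Karger--R\"acke random partition, which exploits the fact that $\R^k$ is a doubling space (every ball is covered by $2^{O(k)}$ balls of half its radius).

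Concretely: fix a maximal $(\radius/8)$-separated set $N\subseteq\R^k$ (hence also a $(\radius/8)$-net), give each $\nu\in N$ an independent uniform priority $\sigma(\nu)\in[0,1]$, draw $R$ uniformly from $[\radius/4,\radius/2]$, and let $\mathcal P(x)$ be the cluster of the lowest-priority net point within distance $R$ of $x$. Boundedness is immediate: each cluster sits inside a ball of radius $R\le\radius/2$, so $\diam\le\radius$; and the clusters cover $\R^k$ because $N$ is a $(\radius/8)$-net and $R\ge\radius/8$. For the padding estimate, fix $x\in\R^k$ and set $r:=\radius/\alpha$ with $\alpha=\Theta(k/\eps)$ (so $r\ll\radius$). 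By the standard CKR dichotomy, $B_d(x,r)$ is split between two clusters if and only if the lowest-priority net point whose radius-$R$ ball meets $B_d(x,r)$ at all is one that does not contain all of $B_d(x,r)$ --- that is, a net point at distance in $(R-r,R+r]$ from $x$. List the net points within distance $\radius$ of $x$ as $\nu_1,\dots,\nu_M$ in increasing order of distance to $x$. If $\nu_i$ is the net point causing the split, then $\nu_i$ must in particular beat, in priority, every net point closer to $x$ (all of which also meet $B_d(x,r)$), an event of probability $1/i$; and $\nu_i$ can play this role only for $R$ in an interval of length $2r$, of probability at most $8r/\radius$. A union bound over $i$ and then over $R$ gives
\[
\P{B_d(x,r)\not\subseteq\mathcal P(x)}\ \le\ \frac{8r}{\radius}\sum_{i=1}^{M}\frac1i\ \le\ \frac{8r}{\radius}\bigl(1+\ln M\bigr).
\]
Since the balls $B_d(\nu,\radius/16)$, $\nu\in N$, are pairwise disjoint and lie in $B_d(x,2\radius)$ whenever $d(\nu,x)\le\radius$, a volume comparison gives $M\le(2\radius)^k/(\radius/16)^k=32^k$, so $1+\ln M=O(k)$ and the split probability is $O(kr/\radius)=O(k/\alpha)$, which is at most $\eps$ once $\alpha=\Theta(k/\eps)$.

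The step I expect to be the crux is exactly this last squeeze from $2^{O(k)}$ (or $O(\log|V|)$) down to $O(k)$. It relies on two ingredients working together: taking the centers to be a genuinely $(\radius/8)$-separated net, so that the packing bound caps the number of relevant centers at $2^{O(k)}$; and the harmonic-sum telescoping $\sum_{i\le M}1/i=O(\log M)$, which itself hinges on the observation that a center causing the split must outrank \emph{every} center closer to $x$, not merely those whose ball already covers the little ball $B_d(x,r)$. The rest --- the exact CKR case analysis and measurability (immediate here, as only the finitely many net points within $\radius$ of $x$ are ever relevant to a fixed $x$) --- is routine; alternatively one can simply cite \cite{GKL03} or \cite[Lem 3.11]{LN05}.
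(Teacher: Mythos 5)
The paper does not prove this statement: it is cited from \cite{GKL03} (where it is established for general doubling spaces) and \cite[Lem~3.11]{LN05}, so there is no in-paper proof to compare against. Your argument is a correct, self-contained reproduction of the standard route those references take, namely the Calinescu--Karger--R\"acke truncated-exponential/random-priority scheme specialized to a metric of doubling dimension $O(k)$. The two points that carry the argument are exactly the ones you flag: (i) the CKR dichotomy shows that the net point $\nu_i$ responsible for cutting $B_d(x,r)$ must outrank \emph{every} net point closer to $x$, giving the $1/i$ factor and hence a harmonic sum $O(\log M)$ rather than $M$; and (ii) the $(\radius/8)$-separation of $N$ together with a packing bound in $\R^k$ caps $M$ at $2^{O(k)}$, so $\log M = O(k)$. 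Combined with the $O(r/\radius)$ probability that $R$ falls in the bad window, this yields a split probability $O(kr/\radius)$ and hence $\alpha = O(k/\eps)$. Your boundedness and covering checks (each cluster lies in a ball of radius $R \le \radius/2$; $N$ is a $\radius/8$-net and $R \ge \radius/4$), the independence of $R$ from the priorities, and the remark that only the finitely many net points within distance $\radius$ of any fixed $x$ matter (so the infinite net poses no measurability issue) are all correct. The observation that the naive random grid only achieves $\alpha = O(k^{3/2}/\eps)$, motivating the CKR construction, is also accurate.
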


The next result is proved in \cite{CCGGP98}.
See also \cite[Lem 3.16]{LN05}.

\begin{theorem}\label{thm:RkLip}
If vertices are mapped to $ \mathbb R^k$, and $d(.,.)$ is the Euclidean distance between the vertices, then for every $\radius > 0$, $(V,d)$ admits a $(\radius, O(\sqrt{k}))$-Lipschitz random partition.
\end{theorem}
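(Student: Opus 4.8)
The plan is to build the partition by a ``continuous'' version of the Calinescu--Karloff--Rabani ball-carving scheme, in which the potential cluster centers are not the vertices themselves but a dense set of points of the ambient space $\mathbb{R}^k$. Concretely: pick a radius $R$ uniformly at random in $[\radius/4,\radius/2]$; take a very fine net $N$ of $\bigcup_v B_{d}(\bx_v,\radius)$ (or, cleanly, the points of a Poisson process on $\mathbb{R}^k$ of large intensity, each carrying an independent timestamp); put a uniformly random priority order on $N$; and assign each vertex $v$ to the cell of the highest-priority point $p\in N$ with $\|\bx_p-\bx_v\|\le R$. Any two vertices in the same cell are both within distance $R$ of a common center, so each cell has diameter at most $2R\le\radius$ and the partition is $\radius$-bounded. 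It remains to establish the Lipschitz bound.

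Fix $u,v$ and write $\ell=\|\bx_u-\bx_v\|$; we may assume $\ell\le\radius$, since otherwise the claimed bound $O(\sqrt k)\cdot\ell/\radius$ is at least $1$ and vacuous. For a fixed $R$, the standard CKR observation is that $u$ and $v$ land in the same cell precisely when the overall highest-priority center among those within distance $R$ of $\bx_u$ \emph{or} $\bx_v$ happens to lie within distance $R$ of \emph{both}; hence $\Pr[P(u)\neq P(v)\mid R]=|S_N(R)|/(|S_N(R)|+|C_N(R)|)$, where $C_N(R)$ is the set of net points within $R$ of both $\bx_u,\bx_v$ and $S_N(R)$ the set of net points within $R$ of exactly one of them. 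Letting the net become arbitrarily fine (or the Poisson intensity tend to infinity), this ratio converges to $\mathrm{vol}(B(\bx_u,R)\triangle B(\bx_v,R))/\mathrm{vol}(B(\bx_u,R)\cup B(\bx_v,R))\le \mathrm{vol}(B(\bx_u,R)\triangle B(\bx_v,R))/\mathrm{vol}(B(\bx_u,R))$, where now $B(\cdot,\cdot)$ denotes Euclidean balls in $\mathbb{R}^k$.

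The heart of the proof is then a purely geometric estimate: if two balls of radius $R$ in $\mathbb{R}^k$ have centers at distance $\ell\le R$, then $\mathrm{vol}(B\triangle B')=O(\sqrt k)\cdot\omega_k R^{k-1}\ell$, where $\omega_k=\mathrm{vol}(B_1^k)$. Indeed, writing the crescent $B(\bx_u,R)\setminus B(\bx_v,R)$ in polar coordinates about $\bx_u$, its radial thickness in direction $\theta\in S^{k-1}$ is, to first order in $\ell/R$, equal to $\ell\cdot\max(0,-\langle\theta,\widehat{\bx_v-\bx_u}\rangle)$ (the exact value solves a quadratic, with linearization error $O(\ell^2/R)$ per direction), so integrating over the sphere contributes the factor $\int_{S^{k-1}}|\langle\theta,e_1\rangle|\,d\theta=|S^{k-1}|\cdot\mathbb{E}_\theta|\langle\theta,e_1\rangle|=\Theta(\sqrt k)\cdot\omega_k$, using that $|S^{k-1}|=k\omega_k$ and that a coordinate of a uniformly random unit vector in $\mathbb{R}^k$ has expected absolute value $\Theta(1/\sqrt k)$. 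Dividing by $\mathrm{vol}(B(\bx_u,R))=\omega_kR^k$ gives $O(\sqrt k\cdot\ell/R)$, and averaging over $R\in[\radius/4,\radius/2]$ (on which $\mathbb{E}[1/R]=O(1/\radius)$) yields $\Pr[P(u)\neq P(v)]=O(\sqrt k\cdot\ell/\radius)=O(\sqrt k)\cdot d(u,v)/\radius$, which is the desired $(\radius,O(\sqrt k))$-Lipschitz property.

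The two places that need care are, I expect, the main obstacles. First, the passage to the continuous limit must be made rigorous: one should fix a net fine enough (a randomly rotated and scaled lattice, or a Poisson process, which is quasi-uniform) that the discrete ratios $|S_N(R)|/(|S_N(R)|+|C_N(R)|)$ are within a constant factor of the corresponding volume ratios uniformly over $R$ and over the finitely many pairs of vertices. Second, the crescent-thickness estimate is first order in $\ell/R$, so one should restrict attention to $\ell$ small compared to $\radius$ (handling $\ell=\Omega(\radius/\sqrt k)$ by the trivial bound) and keep track of the $O((\ell/R)^2)$ error. Neither is conceptually deep; the one genuinely load-bearing fact is the scaling of the symmetric-difference volume of two nearby Euclidean balls as $\sqrt k$ (rather than $k$) times $\omega_kR^{k-1}\ell$ — which is exactly what distinguishes $\ell_2^k$ from a generic $k$-dimensional normed space here, and is why one gets $O(\sqrt k)$ rather than the $O(k)$ that any cube- or tiling-based construction would produce.
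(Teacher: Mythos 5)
The paper does not actually prove this theorem: it states it as a known result and cites \cite{CCGGP98} (with a pointer to \cite[Lem.\ 3.16]{LN05}). Your argument --- a continuous Calinescu--Karloff--Rabani ball-carving with centers drawn from a fine net (or Poisson process) of the ambient space, the separation probability reduced to the volume ratio $\mathrm{vol}(B(\bx_u,R)\triangle B(\bx_v,R))/\mathrm{vol}(B(\bx_u,R)\cup B(\bx_v,R))$, and the $O(\sqrt{k})$ gain extracted from the symmetric-difference estimate via $\E_\theta\left|\langle\theta,e_1\rangle\right|=\Theta(1/\sqrt{k})$ for a uniform unit vector $\theta\in S^{k-1}$ --- is essentially the standard proof appearing in those references, and it is correct provided the two issues you flag (making the net-to-volume limit uniform, and controlling the $O(\ell^2/R)$ second-order error by restricting to $\ell=O(\radius/\sqrt{k})$ and using the trivial bound otherwise) are carried through.
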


\section{Statement of Our Results}

\begin{theorem} [$\lambda_k$ Structure Theorem] 
\label{thm:structlambda}
There is a universal constant $c_1$ such that for any sequence of vectors $\{\bx_v\}_{v\in V}$,
 and $\eps,\radius>0$, if
 \[ \lambda_k > c_1 \cdot \log^2 k \cdot \cE{\sd_\bx} \cdot \frac {\log^{2} (1/\eps)}{\eps^3 \cdot \radius}, \]
then there are $2k$ sets $T_1,\ldots,T_{2k}$ with diameter $\diam(T_i,\sd_\bx)\leq \radius$ covering $1-\eps$ fraction of vertices. Furthermore, we can find these sets by a randomized polynomial time algorithm.
\end{theorem}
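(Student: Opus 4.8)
The plan is to shrink the ambient dimension, run a padded low-diameter decomposition there, and then either read off $2k$ clusters directly or extract $k$ disjointly supported low-Rayleigh-quotient functions, contradicting the hypothesis on $\lambda_k$ via Fact~\ref{fact:lambda}. First I would apply a Johnson--Lindenstrauss map sending $\{\bx_v\}_{v\in V}$ to vectors $\{\bz_v\}_{v\in V}\in\R^h$ with $h=O(\log(k/\eps))$. I only need two properties: for every pair $\E[\sd_\bz(u,v)]=\sd_\bx(u,v)$, so $\E[\cE{\sd_\bz}]=\cE{\sd_\bx}$ and hence $\cE{\sd_\bz}\le 4\cE{\sd_\bx}$ with constant probability; and $\Pr[\sd_\bx(u,v)>2\sd_\bz(u,v)]\le e^{-\Omega(h)}\le \eps^2/(Ck)$ for each pair, so that in expectation (hence, by Markov, with constant probability) at most $\eps^2 n^2/(Ck)$ pairs are \emph{distorted}. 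Note $\sd_\bz$ need not satisfy the triangle inequality, but the Euclidean distance $\hd_\bz$ does, and that is all the decomposition step uses.

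Next I would apply Theorem~\ref{thm:rkpad} to $(V,\hd_\bz)$ with diameter parameter $\sqrt{\radius/4}$ and padding parameter $\eps_0=\Theta(\eps)$, obtaining a $(\sqrt{\radius/4},\alpha,1-\eps_0)$-padded random partition with $\alpha=O(h/\eps_0)=O(\log(k/\eps)/\eps)$; each part then has $\sd_\bz$-diameter at most $\radius/4$. For a part $S$ set $f_S(v):=\min\{\hd_\bz(v,V\setminus S),\ \sqrt{\radius/4}/\alpha\}\cdot\I{v\in S}$; these are disjointly supported and $1$-Lipschitz with respect to $\hd_\bz$. One has, \emph{for every} partition, $\sum_S\sum_{(u,v)\in E}(f_S(u)-f_S(v))^2\le 2\sum_{(u,v)\in E}\sd_\bz(u,v)=O(rn\,\cE{\sd_\bz})$, while $\E_P[\sum_S\sum_v f_S(v)^2]\ge(1-\eps_0)M$ with $M:=\radius n/(4\alpha^2)$ by padding (and it is always $\le M$). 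Hence, setting $t:=\Theta(\cE{\sd_\bx}\alpha^2/\radius)$ and conditioning on the constant-probability event $\cE{\sd_\bz}\le 4\cE{\sd_\bx}$, a sampled partition $P$ satisfies, with constant probability, both $\sum_S\sum_v f_S(v)^2\ge(1-2\eps_0)M$ and $\tfrac1r\sum_S\sum_{(u,v)\in E}(f_S(u)-f_S(v))^2\le t\sum_S\sum_v f_S(v)^2$ (a standard reverse-averaging argument).

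Now the dichotomy. Put $\tau:=8t/\eps$. If at least $k$ parts $S$ have $\cR(f_S)\le\tau$, then by Fact~\ref{fact:lambda}, $\lambda_k\le 2\tau=O\!\big(\cE{\sd_\bx}\,\alpha^2/(\eps\radius)\big)=O\!\big(\cE{\sd_\bx}\cdot\tfrac{\log^2 k\,\log^2(1/\eps)}{\eps^3\radius}\big)$, which contradicts the hypothesis once $c_1$ is a large enough universal constant. Otherwise the ``bad'' parts ($\cR(f_S)>\tau$) carry cut more than $\tau$ times their $f$-mass, yet the total cut is $\le t$ times the total $f$-mass $\le tM$, so the bad parts carry less than $(t/\tau)M=\eps M/8$ of the $f$-mass; the fewer than $k$ ``good'' parts thus carry more than $(1-2\eps_0-\eps/8)M$ of it, and dividing by $\radius/(4\alpha^2)$ they cover more than $(1-\Theta(\eps))n$ vertices. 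So fewer than $k\le 2k$ sets of $\sd_\bz$-diameter $\le\radius/4$ cover all but an $O(\eps)$ fraction of $V$.

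It remains to transfer from $\sd_\bz$- to $\sd_\bx$-diameter, and this is the step I expect to be the real obstacle. Discard good parts of size $<\eps n/(Ck)$ (they miss only an $O(\eps)$ fraction); in each surviving part $S_i$ pick a uniformly random center $a_i$ and output $T_i:=B_{\sd_\bx}(a_i,\radius/2)$, which has $\sd_\bx$-diameter $\le\radius$ by the triangle inequality for $\sd_\bx$ (valid whenever $\{\bx_v\}$ is a negative-type metric, as in every application of this theorem). A vertex $v\in S_i$ lies outside $T_i$ only if $(a_i,v)$ is distorted, since $\sd_\bz(a_i,v)\le\radius/4$; hence, averaging over $a_i$, the expected total loss is $\sum_i 2\cdot\#\{\text{distorted pairs in }S_i\}/|S_i|\le (Ck/\eps n)\cdot\#\{\text{distorted pairs}\}=O(\eps n)$ using the bound from the first paragraph, so a good choice of centers yields $\le 2k$ sets of $\sd_\bx$-diameter $\le\radius$ covering $\ge 1-\eps$ of $V$, with the whole procedure running in randomized polynomial time. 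The delicate point is exactly that one must keep the Johnson--Lindenstrauss target dimension at $O(\log(k/\eps))$ rather than $O(\log(n/\eps))$ — which is what makes the final bound depend on $\log k$ and not $\log n$ — and the reason this is possible is that the distorted-pair budget gets divided by the part size $\gtrsim\eps n/k$, so an affordable $\approx 1/k$-fraction of distorted pairs suffices; it is precisely the center/ball (hence triangle-inequality) trick that turns the per-part loss from ``a vertex cover of the distortion graph inside $S_i$'' — which would force $h=\Omega(\log n)$ and degrade the guarantee to $\log^2 n$ — into ``$\approx$ the average distortion-degree'', which sums to $O(\eps n)$. Everything else — the padded decomposition, the Lipschitz and padding estimates for the $f_S$, and the dichotomy — is routine given the tools already stated.
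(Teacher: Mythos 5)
Your proof is correct and follows essentially the same blueprint as the paper's (Lemma~\ref{lem:dimreduction}, Proposition~\ref{prop:partitioninglambda}, Lemma~\ref{lem:projectback}): Johnson--Lindenstrauss to $O(\log(k/\eps))$ dimensions, a padded decomposition there, extraction of disjointly supported Lipschitz bump functions to contradict Fact~\ref{fact:lambda}, and a distorted-pair count to transfer diameter bounds back to $\sd_\bx$. Two details differ. First, where the paper performs an explicit greedy \emph{merging} of small parts into $2k$ sets each guaranteed to have interior mass $\geq \eps n/(8k)$ (Claim~\ref{cl:merging}), you instead argue by a mass-balance dichotomy over \emph{all} parts of the padded partition: the parts with $\cR(f_S)>\tau$ cannot carry more than an $\eps/8$ fraction of the $f$-mass, so the $<k$ good parts already cover $(1-\Theta(\eps))n$ vertices. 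This sidesteps the merging bookkeeping entirely and is arguably cleaner. Second, in the transfer-back step the paper's Lemma~\ref{lem:projectback} is a deterministic counting argument over ``good''/``bad'' parts, while you pick a random center in each surviving part and bound the expected loss by the average distortion-degree; the two are morally the same but yours makes more transparent why $h = O(\log(k/\eps))$ (rather than $\Omega(\log n)$) suffices. Both your transfer and the paper's (which sets $T_i = B_{d'}(u,2\radius)$) implicitly use that $\sd_\bx$ satisfies the triangle inequality, i.e.\ that $\{\bx_v\}$ is of negative type; you flag this while the theorem statement omits it, but it holds in every application, so this is not a gap in your argument.
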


\begin{theorem}[$\phi_k$ Structure Theorem]
\label{thm:structphi}
There is a universal constant $c_2$ such that for any sequence of vectors $\{\bx_v\}_{v\in V}$ that form a metric of negative type,
and $\eps,\radius>0$, if
\[ \phi_k > c_2 \cdot \sqrt{\log k \cdot \log n} \cdot \log\log n \cdot \cE{\sd_\bx} \cdot \frac {\sqrt{\log(1/\eps)}}{\eps\cdot \radius}, \]
then, there are $2k$ sets $T_1,\ldots,T_{2k}$ with diameter $\diam(T_i,\sd_\bx)\leq \radius$ covering $1-\eps$ fraction of vertices. Furthermore, we can find these sets by a randomized polynomial time algorithm.
\end{theorem}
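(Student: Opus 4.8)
The plan is to follow the route sketched in the techniques section: replace the negative-type metric $\sd_\bx$ by a low-dimensional Euclidean metric, carve it with a Lipschitz random partition, and argue that the resulting parts either already give the cover or contradict the lower bound on $\phi_k$.

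The first step is to apply the Arora--Lee--Naor embedding~\cite{ALN05} to the negative-type metric $\sd_\bx$, obtaining Euclidean vectors $\{\by_v\}_{v\in V}$ with $\sd_\bx(u,v)\le \hd_\by(u,v)\le K\cdot \sd_\bx(u,v)$ for all $u,v$, where $K=O(\sqrt{\log n}\,\log\log n)$; in particular $\cE{\hd_\by}\le K\cE{\sd_\bx}$. This is precisely the source of the $\sqrt{\log n}\,\log\log n$ factor in the statement, but it is needed: working directly with $\hd_\bx=\sqrt{\sd_\bx}$ would make a random partition cut an edge $(u,v)$ with probability proportional to $\sqrt{\sd_\bx(u,v)}$, so the cut size (hence the required lower bound on $\phi_k$) would scale with $\sqrt{\cE{\sd_\bx}}$ rather than $\cE{\sd_\bx}$; the ALN step restores the linear dependence at the cost of $K$. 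Next I would apply a Johnson--Lindenstrauss projection of $\{\by_v\}$ into $\R^t$ with $t=O(\log(k/\eps))$, giving vectors $\{\bz_v\}$ such that (i) $\E[\cE{\hd_\bz}]\le \cE{\hd_\by}$ by Jensen, so $\cE{\hd_\bz}=O(K\cE{\sd_\bx})$ with constant probability, and (ii) the number $D$ of \emph{contracted} pairs --- those with $\hd_\bz(u,v)<\tfrac12\hd_\by(u,v)$ --- satisfies $\E[D]=O(\eps^2 n^2/k)$ (each pair is contracted with probability $(\eps/k)^{\Omega(1)}$ for a large enough constant in $t$), so $D$ is this small with constant probability.

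Now I would invoke the $(\radius',O(\sqrt t))$-Lipschitz random partition of Theorem~\ref{thm:RkLip} on $\{\bz_v\}\subseteq\R^t$, with $\radius'=\radius/4$. Every part has $\hd_\bz$-diameter at most $\radius'$, and the expected number of cut edges is at most $O(\sqrt t/\radius')\cdot 2rn\,\cE{\hd_\bz}=O(\sqrt t\,K/\radius')\cdot rn\,\cE{\sd_\bx}$ (using $\sqrt t=O(\sqrt{\log k\,\log(1/\eps)})$ in the relevant regime); fix an outcome of all the randomized steps on which this bound, the energy bound, and the bound on $D$ all hold, which happens with positive constant probability by Markov and a union bound. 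Call a part \emph{bad} if its expansion in $G$ is at least $\phi_k$, and \emph{good} otherwise. Since every cut edge lies on the boundary of exactly two parts, $\phi_k\cdot r\sum_{\text{bad }P}|P|\le \sum_{\text{bad }P}|E(P,\bar P)|\le 2\cdot(\text{number of cut edges})=O(\sqrt t\,K/\radius')\cdot rn\,\cE{\sd_\bx}$, and so the hypothesis $\phi_k>c_2\sqrt{\log k\,\log n}\,\log\log n\cdot\cE{\sd_\bx}\cdot\frac{\sqrt{\log(1/\eps)}}{\eps\,\radius}$ forces the bad parts to cover at most $\tfrac{\eps}{4}n$ vertices, once $c_2$ is a large enough constant. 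On the other hand, the definition of $\phi_k$ immediately bounds the number of good parts by $k-1$: any $k$ disjoint sets of expansion strictly below $\phi_k$ would violate the minimality in the definition. Hence the good parts form at most $k-1\le 2k$ sets covering $\ge(1-\tfrac{\eps}{4})n$ vertices, each of $\hd_\bz$-diameter at most $\radius'$.

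It remains to transfer the diameter bound from $\hd_\bz$ to $\sd_\bx$ at the cost of $O(\eps n)$ more uncovered vertices, and this is the delicate part. Discard every \emph{heavy} vertex (one lying in at least $\theta:=\max(1,8D/(\eps n))$ contracted pairs): there are at most $2D/\theta\le\tfrac{\eps}{4}n$ of them. Also discard every good part with at most $4\theta$ surviving vertices: there are at most $k-1$ such parts, contributing at most $4k\theta\le\tfrac{\eps}{4}n$ further vertices, by the bound on $D$. In any surviving good part $P$ and any two of its (non-heavy) vertices $u,v$, since $|P|>4\theta$ there is a non-heavy $w\in P$ with $(u,w)$ and $(w,v)$ both uncontracted, so $\hd_\by(u,v)\le\hd_\by(u,w)+\hd_\by(w,v)\le 2\hd_\bz(u,w)+2\hd_\bz(w,v)\le 4\radius'=\radius$; since $\sd_\bx\le\hd_\by$ pointwise, $\diam(P,\sd_\bx)\le\radius$. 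Padding with empty sets to exactly $2k$ sets yields the desired cover of $\ge(1-\eps)n$ vertices. Every step runs in randomized polynomial time --- the ALN embedding is an SDP, and the projection and the partition of Theorem~\ref{thm:RkLip} are efficiently sampleable --- and since low diameters and sufficient coverage are checkable, one simply repeats until success. The main obstacle is exactly this last transfer: the Lipschitz partition naturally produces parts of small $\hd_\bz$-diameter, but the theorem wants small $\sd_\bx$-diameter, and the very dimension reduction that keeps the Lipschitz constant at $O(\sqrt{\log k})$ rather than $O(\sqrt n)$ inevitably contracts a small but nonzero set of pairs; fitting the heavy-vertex and small-part pruning inside the $\eps$-budget is what makes the bookkeeping tight (one should also separately dispatch the trivial regimes $2k\ge n$ and $\eps,1/k$ outside their useful ranges, and the single part of size $>n/2$ in the good-part count).
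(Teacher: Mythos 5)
Your high-level route is the same as the paper's (ALN $\ell_1$ embedding to fix the $\sqrt{\sd_\bx}$-vs-$\sd_\bx$ mismatch, Johnson--Lindenstrauss to $O(\log(k/\eps))$ dimensions, a $(\radius',O(\sqrt t))$-Lipschitz random partition in $\R^t$, and a transfer of the diameter bound back to $\sd_\bx$), but in the two load-bearing middle steps you do something genuinely different from \autoref{prop:partitioningphi} and \autoref{lem:projectback}, and in one of them more cleanly.

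Where you diverge, and what it buys: the paper extracts the $2k$-cover from the Lipschitz partition by sorting the parts by size, assuming for contradiction that the top $2k$ miss $\eps n$ vertices, and then running a greedy merge to manufacture $2k$ disjoint sets each of size $\geq \eps n/4k$, of which (by Markov) $k$ have small expansion, contradicting $\phi_k$. You instead observe directly that a part $P$ with $\phi(P)<\phi_k$ is already ``good'' and that there can be at most $k-1$ disjoint good parts, because $k$ of them would witness $\phi_k<\phi_k$; the cut-edge bound $\phi_k\cdot r\sum_{\text{bad}}|P|\leq 2\cdot|\{\text{cut edges}\}|$ then forces the bad parts to cover $\leq\eps n/4$ vertices, so the $\leq k-1$ good parts already form the cover. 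This is slicker: it skips the sort-and-merge entirely and produces parts that are literally cells of the partition (hence automatically of small $\hd_\bz$-diameter), so there is no ``case 1 vs.\ case 2'' dichotomy to manage. The quantitative bound comes out the same. Your parenthetical worry about a single part of size $>n/2$ is actually harmless here: the inequality $|E(P,\bar P)|\geq\phi_k r|P|$ for bad $P$ and the ``$\geq k$ disjoint sets of expansion $<\phi_k$ contradicts minimality'' argument both hold regardless of $|P|$.

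The one place where your write-up has a genuine (though reparable) gap is the final transfer step, where the paper uses \autoref{lem:projectback}. Your heavy-vertex threshold $\theta=\max(1,8D/\eps n)$ combined with the bound $4k\theta\leq\eps n/4$ ``by the bound on $D$'' does not hold when $D$ is small: if $D\leq\eps n/8$ then $\theta=1$ and $4k\theta=4k$, which is not controlled by $D$ and can exceed $\eps n/4$ when $k\gtrsim\eps n$. The fix is that when $\theta=1$ the small-part pruning is not needed at all --- a non-heavy vertex is then in $0$ contracted pairs, so for any two surviving $u,v$ in the same part the pair $(u,v)$ is itself uncontracted and $\hd_\by(u,v)\leq 2\hd_\bz(u,v)\leq 2\radius'$ directly, no bridging vertex $w$ required. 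When $\theta>1$ your bookkeeping is fine provided the JL failure probability per pair is taken small enough that $D\lesssim\eps^2 n^2/k$ with constant probability (you need $4k\theta=32kD/\eps n\leq\eps n/4$, i.e.\ $D\leq\eps^2 n^2/128k$, which is exactly what a $t=O(\log(k/\eps))$ projection with a suitably large constant delivers). With that case split made explicit, your transfer argument is a correct alternative to \autoref{lem:projectback}, which instead classifies each $S_i$ as good or bad according to whether some $u\in S_i$ has a $2\radius$-ball containing $(1-\eps/2)|S_i|$ of $S_i$, and charges bad $S_i$ to $\Omega(\eps|S_i|^2)$ distorted pairs; the two are different packagings of the same ``few distorted pairs $\Rightarrow$ small coverage loss'' counting.
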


\begin{theorem}[Rounding ARV] 
\label{thm:arvrounding}There are universal constants $c_3,\eps_1>0$ and a randomized polynomial time rounding algorithm such that the following holds.

For any feasible solution to \eqref{eq:arv}, $\{  \bx_v \}_{v\in V}$ of cost $\sdp$,
if there are $2k$ sets $T_1,\ldots,T_{2k}$ of diameter $\leq c_3/\sqrt{\log k}$ covering $(1-\eps_1)n$ vertices, then the rounding algorithm finds a cut of expansion at most ${\sdp} \cdot O(\sqrt {\log k})$ with high probability.
\end{theorem}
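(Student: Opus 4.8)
The plan is to prove Theorem~\ref{thm:arvrounding} by combining the Leighton--Rao well-spreadness dichotomy with the Arora--Lee--Naor refinement of the ARV separation theorem, applied not to the ambient metric but to a ``contracted'' metric that collapses each of the $2k$ low-diameter sets $T_i$ to a point.

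\medskip

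\noindent First I would run the standard Leighton--Rao argument: given the feasible ARV solution $\{\bx_v\}_{v\in V}$, either a simple threshold-rounding of a randomly chosen ball already yields a cut of expansion $O(\sdp)$ (and we are done, since $O(\sdp) \le O(\sqrt{\log k}\cdot \sdp)$), or there is a set $A \subseteq V$ that is $(4, 1/16)$-well-spread with respect to $\sd_\bx$. Because the $2k$ sets $T_1,\ldots,T_{2k}$ of diameter $\le c_3/\sqrt{\log k}$ cover all but $\eps_1 n$ vertices, and because $\sum_{u,v\in A}\sd_\bx(u,v) \ge n^2/16$ forces $|A|$ to be a constant fraction of $n$ and $A$ to have ``mass'' spread over many of the $T_i$, by choosing $\eps_1$ a small enough constant I can pass to $A' := A \setminus \bigcup\{\text{uncovered vertices}\}$ and retain that $A'$ is $(4, \Omega(1))$-well-spread while being entirely covered by the $2k$ sets. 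Replacing each $T_i$ by $T_i \cap A'$, I get $2k$ sets of diameter $\le c_3/\sqrt{\log k}$ partitioning (after arbitrarily breaking overlaps) the well-spread set $A'$.

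\medskip

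\noindent The core step is the separation argument on the contracted metric. Pick a representative $a_i \in T_i \cap A'$ for each $i$ and define $\bx'_v$ for $v\in A'$ to equal $\bx_{a_{i(v)}}$ where $i(v)$ is the (chosen) index with $v\in T_{i(v)}$; equivalently, contract each $T_i\cap A'$ to its representative. Since all pairwise distances inside each $T_i$ are at most $c_3/\sqrt{\log k}$, we have $|\sd_\bx(u,v) - \sd_\bx(a_{i(u)},a_{i(v)})| \le O(c_3/\sqrt{\log k})$ (using triangle inequality and boundedness of the original distances), so for suitable $c_3$ the contracted point set $\{\bx'_v\}_{v\in A'}$ is still $(O(1),\Omega(1))$-well-spread and still approximately negative-type, and it occupies only $2k$ distinct points. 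Now apply the $\Delta$-separated-sets / chaining version of ARV due to Arora--Lee--Naor~\cite{ALN05}, as the paper signals in the ``rounding schemes'' discussion: for a negative-type well-spread set supported on $k$ points one obtains a subset $U$ with $\sum_{u,v}|\sd_{\bx}(u,U) - \sd_{\bx}(v,U)| \ge \Omega(n^2 \sqrt{\log k})$, which is exactly the $\radius = \Omega(1/\sqrt{\log k})$ guarantee. Feeding $U$ into the standard ball-growing/threshold rounding (the ``$\sum_{u,v}|\d_\bx(u,U)-\d_\bx(v,U)| \ge n^2/\radius \Rightarrow$ cut of expansion $O(\sdp/\radius)$'' implication quoted in the excerpt) produces a cut of expansion $O(\sqrt{\log k}\cdot \sdp)$. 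Finally I would check that the cut found in the contracted/truncated instance can be lifted back to a cut of $V$ with the same asymptotic expansion: edges incident to the $\eps_1 n$ uncovered vertices contribute at most $O(\eps_1 \sdp \cdot n)$ to the numerator, which is absorbed by choosing $\eps_1$ small.

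\medskip

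\noindent The main obstacle I expect is the bookkeeping around the contraction: one must verify that collapsing the $T_i$'s neither destroys well-spreadness (the lower bound $\sum_{u,v\in A'}\sd_\bx \ge \Omega(n^2)$ could in principle be concentrated on \emph{within}-$T_i$ pairs, which collapse to zero) nor the negative-type-ness needed to invoke \cite{ALN05}. The first worry is handled because within-$T_i$ pairs contribute at most $2k \cdot (\max_i|T_i|)^2 \cdot c_3/\sqrt{\log k}$, which is $o(n^2)$ once $c_3$ is a small constant and $k$ is not too large relative to the mass, so the $\Omega(n^2)$ of well-spread mass survives the contraction. The second is handled because contracting groups of diameter $\eta$ perturbs all squared distances by $O(\eta)$, and a metric that is $O(\eta)$-close to negative type still admits the ALN separation with only a constant-factor loss; alternatively one re-derives ALN directly for the ``quotient'' pseudometric. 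Making these two estimates quantitatively consistent with a single universal $c_3$ and a single $\eps_1$ is the delicate part, but it is routine once the dependencies are tracked.
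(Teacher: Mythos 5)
Your proposal follows essentially the same route as the paper: contract each low-diameter $T_i\cap A$ to a representative $a_i$ with weight $|T_i\cap A|$, invoke the Arora--Lee--Naor measured-descent separation on this $2k$-point weighted negative-type metric (the paper packages exactly this as its ``ARV with weights'' lemma), then transfer the separation back to $A$ losing only $O(\delta\cdot n^2)$ and finish with the Fr\'echet/Leighton--Rao rounding. One small note: your worry that the contracted metric is ``only approximately negative-type'' is unfounded---the points $\bx_{a_1},\dots,\bx_{a_{2k}}$ are a subset of the original negative-type point set, so the quotient metric is exactly negative-type and ALN applies directly; the only place the distortion $O(c_3/\sqrt{\log k})$ from collapsing the $T_i$'s matters is in preserving the $\Delta$-separated mass and in the final transfer step, which is precisely what the paper tunes $c_3$ against.
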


\begin{theorem}[Rounding Sherali-Adams Relaxations] \label{thm:sa} There are universal constants $c_4,\eps_1>0$ and a randomized polynomial time rounding algorithm such that the following holds.

For any feasible solution to \eqref{eq:arvk},  $\{  \bx_v \}_{v\in V}$ of cost $\rm sdp$,
if there are $2k$ sets $T_1,\ldots, T_{2k}$ of diameter $\leq c_4$ covering $(1-\eps_1)n$ vertices, then
 the rounding algorithm finds a cut of expansion at most $O(\sdp)$ with high probability.
 \end{theorem}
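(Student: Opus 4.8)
The plan is to combine the Leighton--Rao rounding dichotomy with the cut-metric guarantee that Sherali--Adams feasibility provides on a set of $2k$ ``centers''. First I would run the standard Leighton--Rao rounding on $\{\bx_v\}_{v\in V}$: as recalled above, it either already outputs a cut of expansion at most $4\sdp$ --- in which case we are done --- or it produces a $(4,1/16)$-well-spread set $A\subseteq V$, which then satisfies $|A|\geq n/8$. In the latter case I would pass to $A':=A\cap(T_1\cup\cdots\cup T_{2k})$, discarding at most $\eps_1 n$ vertices; since each vertex of a $(4,\cdot)$-well-spread set contributes at most $4|A|\leq 4n$ to $\sum_{u,v}\sd_\bx(u,v)$, for $\eps_1$ a small enough constant $A'$ is still $(4,\Omega(1))$-well-spread. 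Writing $A_i:=A'\cap T_i$, we are then in the situation of a $(4,\Omega(1))$-well-spread set $A'$ covered by $2k$ sets $A_1,\dots,A_{2k}$, each of $\sd_\bx$-diameter at most $c_4$.

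Next I would pick an arbitrary representative $a_i\in A_i$, set $C:=\{a_1,\dots,a_{2k}\}$, and apply the constraints of \eqref{eq:arvk} with $R=C$ (this uses feasibility for $2k$ rounds, as in \autoref{thm:ranksseconst}). As recalled in the preliminaries, this yields a distribution over distance functions $\cD$ on $V$ with $\E[\cD(u,v)]=\sd_\bx(u,v)$ for all $u,v$, with $\sum_{u,v}\cD(u,v)=n^2$ almost surely, and with $\cD$ restricted to $C$ almost surely a cut metric: there is a partition $C=C_L\cup C_R$ and a value $M\geq 0$ with $\cD(a_i,a_j)=0$ when $a_i,a_j$ lie on the same side and $\cD(a_i,a_j)=M$ otherwise. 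I would then sample one $\cD$ and show that with constant probability it satisfies simultaneously
\[ \sum_{(u,v)\in E}\cD(u,v)\leq O(1)\cdot\sum_{(u,v)\in E}\sd_\bx(u,v),\qquad \sum_{u,v\in A'}\cD(u,v)\geq\Omega(n^2),\qquad \sum_{u\in A'}\cD(u,C)\leq O(c_4\cdot n). \]
The first is Markov applied to the edge energy; the third is Markov applied using $\E[\sum_{u\in A'}\cD(u,C)]\leq\sum_i|A_i|\cdot c_4\leq c_4 n$, which holds because $\cD(u,C)\leq\cD(u,a_i)$ and $\E[\cD(u,a_i)]=\sd_\bx(u,a_i)\leq c_4$ for $u\in A_i$; the second is a reverse-Markov bound, using that $\sum_{u,v\in A'}\cD(u,v)$ is always at most $n^2$ and has expectation $\Omega(n^2)$ by well-spreadness.

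Then I would round $\cD$ by a threshold cut along the cut-metric direction. Set $f(u):=\cD(u,C_L)$ and $\hat f:=\min(f,M)$; this $\hat f$ is $1$-Lipschitz with respect to $\cD$, equals $0$ on $C_L$ and $M$ on $C_R$, and takes values in $[0,M]$. Pick $t$ uniform in $[0,M]$ and output $S_t:=\{u:\hat f(u)<t\}$; then an edge is cut with probability $|\hat f(u)-\hat f(v)|/M\leq\cD(u,v)/M$, and a pair $u,v\in A'$ is separated with probability $|\hat f(u)-\hat f(v)|/M$. The crux is the estimate $|\hat f(u)-\hat f(v)|\geq\cD(u,v)-2\cD(u,C)-2\cD(v,C)$ for all $u,v\in A'$, which I would prove by splitting on which side of $C$ the nearest center to $u$ and the nearest center to $v$ lie: if they lie on the same side then $\cD(u,v)\leq\cD(u,C)+\cD(v,C)$ and the bound is vacuous, while if they lie on opposite sides then (say) $\hat f(u)\leq\cD(u,C)$, $\hat f(v)\geq M-\cD(v,C)$, and $M\geq\cD(u,v)-\cD(u,C)-\cD(v,C)$ by the triangle inequality through the two centers. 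Summing, $\sum_{u,v\in A'}|\hat f(u)-\hat f(v)|\geq\Omega(n^2)-O(n)\sum_{u\in A'}\cD(u,C)\geq\Omega(n^2)$ once $c_4$ is a small enough constant --- and the same computation forces $M=\Omega(1)$, so the sampled cut of $C$ is nondegenerate. Hence $\E[|E(S_t,\bar S_t)|]\leq O(\sdp\cdot rn)/M$ while the expected number of separated pairs in $A'$ is $\Omega(n^2)/M$; since the latter is at most $n\cdot\min(|S_t|,|\bar S_t|)$, a standard averaging/ratio argument produces a $t$ for which $S_t$ is a nonempty proper subset with $\phi(S_t)\leq |E(S_t,\bar S_t)|/\big(r\min(|S_t|,|\bar S_t|)\big)=O(\sdp)$. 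Repeating the sampling $O(\log n)$ times boosts the constant success probability to high probability.

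The step I expect to be the main obstacle is the last one: establishing that the bare two-point (cut) structure of $\cD$ on the centers $C$ already certifies the spread of $A'$, i.e.\ the Lipschitz-type lower bound $|\hat f(u)-\hat f(v)|\gtrsim\cD(u,v)$ up to the additive error $\cD(u,C)+\cD(v,C)$. This is exactly where the hypothesis that the covering sets have diameter at most the small constant $c_4$ enters essentially: it is what makes $\sum_{u\in A'}\cD(u,C)=O(c_4 n)$ negligible next to $\sum_{u,v\in A'}\cD(u,v)=\Omega(n^2)$, so that the cut $(C_L,C_R)$ can stand in for the full metric $\cD$ during rounding. The remaining ingredients --- the Leighton--Rao dichotomy, the Markov/reverse-Markov concentration, and the threshold-rounding ratio argument --- are routine.
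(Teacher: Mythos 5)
Your proposal is correct and follows essentially the same route as the paper: representatives $a_i\in A_i$, the distribution over cut metrics $\cD$ induced by $R=C$ in the $2k$-round Sherali--Adams solution, the three simultaneous Markov/reverse-Markov bounds, the Lipschitz-type lower bound $|f(u)-f(v)|\geq\cD(u,v)-2\cD(u,C)-2\cD(v,C)$ proved by the same two-case split on the cut structure of $C$, and a Fréchet-type rounding. The only cosmetic differences are that the paper takes the Fréchet coordinate $f(v)=\cD(a_1,v)$ and invokes \autoref{fact:frechet} directly, whereas you use $f(u)=\cD(u,C_L)$ truncated at the cut width $M$ and spell out the threshold-rounding ratio argument; these are equivalent.
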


\section{The Structure Theorems}
\subsection{$\lambda_k$ structure theorem}

In this section we prove \autoref{thm:structlambda}.
Using the following lemma we project the solution of SDP to a $O(\log k)$ dimension space.
This helps us to decrease the loss in partitioning to a function that is only depend (poly-logarithmically) on $k$. We show that the solution of SDP only suffer by a constant factor and most of the pair of vertices will not get distorted.
\begin{lemma}
\label{lem:dimreduction}
For any sequence of vectors $\{ \bx_v \}_{v\in V} \subseteq \mathbb{R}^m$, and  $\eps>0$,
there exists vectors $\{ \bz_v\}_{v\in V} \in \mathbb{R}^h$ where $h = \Theta( \log 1/\eps)$ such that
\begin{eqnarray*}
\cE{\hd_\bz} &\leq & 4\cE{\hd_\bx},\\
\cE{\sd_\bz} &\leq & 4\cE{\sd_\bx},\\
\big| \{ (u,v): \norm{\bx_u-\bx_v}^2 &>& 2 \norm{\bz_u-\bz_v}^2 \} \big| \leq  \eps n^2.
\end{eqnarray*}
\end{lemma}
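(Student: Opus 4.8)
The plan is to prove Lemma~\ref{lem:dimreduction} via a Johnson--Lindenstrauss style random projection, combined with a truncation/scaling trick to control the worst-case blow-up of the edge energies while only distorting a small fraction of pairs.

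\medskip

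\emph{Setup.} Let $h = C \log(1/\eps)$ for a suitable absolute constant $C$, and let $\Pi : \mathbb{R}^m \to \mathbb{R}^h$ be a random linear map of the standard JL form (e.g.\ a $\frac{1}{\sqrt h}$-scaled Gaussian matrix), so that for any fixed $w \in \mathbb{R}^m$ we have $\mathbb{E}\|\Pi w\|^2 = \|w\|^2$ and the concentration bounds $\mathbb{P}[\|\Pi w\|^2 > (1+t)\|w\|^2] \le e^{-\Omega(t^2 h)}$ and $\mathbb{P}[\|\Pi w\|^2 < (1-t)\|w\|^2] \le e^{-\Omega(t^2 h)}$ for $t \in (0,1)$. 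Apply $\Pi$ to the difference vectors $\bx_u - \bx_v$ and set (tentatively) $\bz_v = \Pi \bx_v$.

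\medskip

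\emph{The three bounds.} For the third bound, fix a pair $(u,v)$; by the lower-tail estimate with $t = 1/2$, $\mathbb{P}[\|\Pi(\bx_u-\bx_v)\|^2 < \tfrac12 \|\bx_u - \bx_v\|^2] \le e^{-\Omega(h)} \le \eps$ once $C$ is large enough, so in expectation at most $\eps n^2$ pairs are ``bad'', and by Markov a fixed realization with at most, say, $2\eps n^2$ bad pairs exists with probability $\ge 1/2$ (adjusting constants). For the \emph{first} bound, $\mathbb{E}\big[\sum_{(u,v)\in E}\|\Pi(\bx_u-\bx_v)\|\big] \le \sum_{(u,v)\in E}\big(\mathbb{E}\|\Pi(\bx_u-\bx_v)\|^2\big)^{1/2} = \sum_{(u,v)\in E}\|\bx_u - \bx_v\|$ by Jensen, so $\mathbb{E}[\cE{\hd_\bz}] \le \cE{\hd_\bx}$ and by Markov $\cE{\hd_\bz} \le 4\cE{\hd_\bx}$ holds with probability $\ge 3/4$. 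For the \emph{second} bound, $\mathbb{E}\big[\sum_{(u,v)\in E}\|\Pi(\bx_u-\bx_v)\|^2\big] = \sum_{(u,v)\in E}\|\bx_u-\bx_v\|^2$ exactly, so again Markov gives $\cE{\sd_\bz} \le 4\cE{\sd_\bx}$ with probability $\ge 3/4$. A union bound over the three events (each failing with probability $< 1/4$ after rescaling the target probabilities, or equivalently strengthening the constants in the Markov steps) leaves a positive-probability realization satisfying all three simultaneously, which proves existence; the same argument is constructive up to the polynomial-time caveat since one can sample $\Pi$ and check all $O(n^2)$ conditions directly.

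\medskip

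\emph{Main obstacle.} The delicate point is the \emph{upper} side of the edge energies: JL gives $\mathbb{E}\|\Pi(\bx_u-\bx_v)\|^2 = \|\bx_u - \bx_v\|^2$, and this is exactly what makes the second and first bounds work in expectation --- there is in fact no worst-case blow-up to fight here, only an in-expectation statement turned into a fixed instance by Markov, which is why the constant $4$ (rather than $1+o(1)$) appears. The genuinely substantive part is that $h = \Theta(\log(1/\eps))$ --- only logarithmic in $1/\eps$, not $\log n$ --- suffices for the third bound; this works precisely because we only need the \emph{one-sided} lower-distortion estimate to hold for each pair \emph{in expectation over pairs}, not simultaneously for all pairs, so the failure probability per pair only has to beat $\eps$ rather than $1/n^2$. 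I would make sure to state the per-pair tail bound cleanly and then keep the union bound strictly to the three aggregate events (energies and bad-pair count), never over individual pairs. One should also double-check that $\|\bz_u - \bz_v\|^2 = \|\Pi(\bx_u - \bx_v)\|^2$ genuinely holds, i.e.\ that defining $\bz_v := \Pi\bx_v$ is consistent with applying $\Pi$ to differences --- it is, by linearity --- so no separate argument is needed there.
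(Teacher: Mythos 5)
Your proof is correct and follows essentially the same route as the paper: a Gaussian JL projection onto $h = \Theta(\log 1/\eps)$ dimensions, Jensen's inequality for $\E\|\Pi w\| \le \|w\|$, the one-sided lower-tail bound for the per-pair distortion probability, and Markov plus a union bound over the three aggregate events. The one place you need to tighten the constants is the third bound: as written you get at most $2\eps n^2$ bad pairs, so you should choose $h$ (still $\Theta(\log 1/\eps)$) so that the per-pair failure probability is at most $\eps/4$, which makes the Markov step yield $\eps n^2$ with failure probability $\le 1/4$ and the three-way union bound close with positive success probability; this is exactly what the paper does.
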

\begin{proof}
Let $g_1, g_2, \ldots, g_h$ be i.i.d. $m$-dimensional Gaussians, and
consider the random mapping $\Gamma_{m,h} : \mathbb R^k \to \mathbb R^h$ defined by
$\Gamma_{m,h}(x) = h^{-1/2} (\langle g_1, x \rangle, \langle g_2, x \rangle, \ldots, \langle g_h, x \rangle)$.
Then we have the following basic estimates (see, e.g. \cite[Ch. 15]{Mat01} or \cite[Ch. 1]{LT11}).
For every $x \in \mathbb R^m$,
\begin{eqnarray}
\label{eq:guassl22}
\E\left[\|\Gamma_{m,h}(x)\|^2\right] &=& \|x\|^2,\\
\label{eq:guassl2}
\E\left[ \norm{\Gamma_{m,h}(x)} \right] &\leq& \norm{x}.
\end{eqnarray}
Equation \eqref{eq:guassl2} follows from \eqref{eq:guassl22} and the Jensen's inequality. 
Also for every $\alpha \in (0,\frac12]$,
\begin{equation}\label{eq:guasscon}
\pr\left[\vphantom{\bigoplus} (1-\alpha)\norm{x}^2\leq  \norm{\Gamma_{m,h}(x)}^2 \leq   (1+\alpha)\norm{x}^2\right] \geq 1-2 e^{-\alpha^2 h/12}\,,
\end{equation}
Choose $h=\Theta(\log 1/\eps)$ such that $2e^{-h/48} < \eps/4$. 
Then, by \eqref{eq:guasscon} for each $u,v\in V$, we get
$$ \pr\left[ \norm{\Gamma_{m,h}(\bx_v - \bx_u)}^2 < \norm{\bx_v-\bx_u}^2/2 \right] \leq \eps/4.$$
By linearity of the $\Gamma_{m,h}$ operator, 
$$ \Gamma_{m,h}(\bx_v-\bx_u)  = \Gamma_{m,h}(\bx_v) - \Gamma_{m,h}(\bx_u).$$
Therefore, by \eqref{eq:guassl22} and \eqref{eq:guassl2} and Markov's inequality with probability 1/4 we get vectors $\{ \Gamma_{m,h}(\bx_v) \}_{v\in V}\subseteq \mathbb{R}^h$
that satisfies all inequalities in lemma's statement.
\end{proof}

The following proposition is the main part of the proof. Here we show that if
we can not find $2k$ regions of small diameter covering almost all of the vectors $\{\bz_v\}_{v\in V}$ then $\lambda_k$ must be very large.
\begin{proposition}
\label{prop:partitioninglambda}
There is a universal constant $c_1'$ such that for any sequence of vectors $\{\bz_v\}_{v\in V}\in\mathbb{R}^h$, 
 and $0<\eps,\radius<1/2$
if 
$$ \lambda_k \geq c'_1\cdot \frac{h^2}{\eps^3\cdot \radius^2} \cE{\sd_\bz}, $$
then there are $2k$ sets $S_1,\ldots,S_{2k}$  such that $\diam(S_i,\hd_{\bz})\leq \radius$ for $1\leq i\leq 2k$, and
$$ \Big|\bigcup_{i=1}^{2k} S_i\Big| \geq (1-\eps) n.$$
\end{proposition}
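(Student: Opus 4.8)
The plan is to argue by contradiction: assume no $2k$ sets of diameter $\radius$ (in the Euclidean metric $\hd_\bz$) cover a $(1-\eps)$ fraction of the vertices, and deduce that $\lambda_k$ must be smaller than the hypothesized bound. First I would apply \autoref{thm:rkpad} to $(V,\hd_\bz)$ with the parameters $\radius' = \radius$ and $\eps' = \eps/C$ for a suitable constant, obtaining a $(\radius, O(h/\eps), 1-\eps/C)$-padded random partition $\mathcal P$. Sample one partition $P$; it is $\radius$-bounded, so it consists of clusters each of diameter at most $\radius$. Now sort the clusters of $P$ by the number of vertices they contain and let $S_1,\dots,S_{2k}$ be the $2k$ largest. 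By our contrapositive assumption, these cover fewer than $(1-\eps)n$ vertices, so the union of all the \emph{remaining} clusters contains more than $\eps n$ vertices; call this set $W$, and note $W$ is a disjoint union of at least $k+1$ clusters (the $(2k+1)$-st largest and smaller), actually of many clusters, but crucially the vertices of $W$ lie in at least $\ldots$ — more carefully, $W$ is covered by clusters none of which is among the $2k$ largest, so in particular $W$ decomposes into at least $k$ nonempty clusters each of which has at most $|W|/(k+1)$ vertices (since the $2k$ we removed were the largest; any cluster left has size at most that of the $2k$-th largest, and a counting argument bounds the cluster sizes). From these small clusters I would group them greedily into exactly $k$ groups $U_1,\dots,U_k$, each a union of clusters, each of size $\Theta(|W|/k) = \Omega(\eps n/k)$.

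Next, for each group $U_i$ I would build a test function $f_i : V \to \R$ supported on $U_i$, exploiting padding. The natural choice, following \cite{LOT12}, is $f_i(v) = \pr[B_{\hd_\bz}(v,\radius/\alpha) \subseteq \mathcal P(v) \text{ and } \mathcal P(v) \subseteq U_i]$ where $\alpha = O(h/\eps)$ is the padding parameter, or a closely related localized bump; these are disjointly supported since the $U_i$ are disjoint unions of clusters. The key estimates are: (i) a \emph{mass} lower bound $\sum_v f_i(v)^2 = \Omega(\text{something}\cdot |U_i|)$ coming from the fact that a constant fraction of vertices in $U_i$ are padded, so $f_i(v)$ is bounded below for them; and (ii) a \emph{smoothness} upper bound $\sum_{(u,v)\in E}(f_i(u)-f_i(v))^2$ controlled by the Lipschitz-type behavior of the padding probability, which changes only when an edge straddles the $\radius/\alpha$-boundary of a cluster — this event has probability $O(\alpha \cdot \hd_\bz(u,v)/\radius)$ by the argument behind \autoref{thm:RkLip}-style bounds (or one can invoke the padded-decomposition Lipschitz property directly). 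Summing (ii) over all $i$ (the supports overlap only on boundary edges, contributing a bounded multiplicity) gives $\sum_i \sum_{(u,v)\in E}(f_i(u)-f_i(v))^2 \lesssim (\alpha/\radius)^2 \sum_{(u,v)\in E}\hd_\bz(u,v)^2 \cdot (\text{const})$, and here I would relate $\sum_{(u,v)\in E}\hd_\bz(u,v)^2 = \sum_{(u,v)\in E}\sd_\bz(u,v) = 2rn\cdot\cE{\sd_\bz}$. Combining with the mass lower bound $\sum_i \sum_v f_i(v)^2 \gtrsim \eps n$ yields $\min_i \cR(f_i) \lesssim \frac{\alpha^2}{\radius^2} \cdot \frac{rn\,\cE{\sd_\bz}}{\eps n / k}$ — wait, one must be careful to take the minimum rather than average, so I would instead argue: by averaging there is some $i$ with $\cR(f_i) \le \frac{\sum_j \sum_{(u,v)\in E}(f_j(u)-f_j(v))^2}{\sum_j r\sum_v f_j(v)^2}$, using that the numerator and denominator both sum over $j$; this gives $\cR(f_i) \lesssim \frac{(\alpha/\radius)^2 \cdot \cE{\sd_\bz}}{\eps}$ up to constants. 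Then \autoref{fact:lambda} gives $\lambda_k \le 2\cR(f_i) \lesssim \frac{h^2}{\eps^3 \radius^2}\cE{\sd_\bz}$ (absorbing $\alpha^2 = O(h^2/\eps^2)$ and one more $1/\eps$ from the mass bound), contradicting the hypothesis with an appropriate $c_1'$.

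The main obstacle I anticipate is making step (ii) — the smoothness bound — work with a clean power of $\hd_\bz(u,v)$ rather than $\hd_\bz(u,v)^2$, and simultaneously keeping the dependence on $h$ only quadratic. The padded decomposition of \autoref{thm:rkpad} controls the probability that a \emph{single} vertex's ball is cut, not directly the probability that an \emph{edge} is cut between its two endpoints' clusters; translating one into the other is where the $\radius/\alpha$ interior radius matters, and where a bump function of width $\Theta(\radius/\alpha)$ rather than a sharp indicator is needed so that the discrete difference $|f_i(u)-f_i(v)|$ is itself $O(\alpha\,\hd_\bz(u,v)/\radius)$ pointwise. Getting $\cR(f_i)$ rather than $\sqrt{\cR(f_i)}$ out of this requires that $(f_i(u)-f_i(v))^2 = O\big((\alpha/\radius)^2 \hd_\bz(u,v)^2\big)$, which is exactly the Lipschitz estimate squared; the bookkeeping to ensure all constants and $\eps$-powers line up to match the stated $\eps^3$ in the denominator is the delicate part, but it is routine once the bump construction is fixed. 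A secondary subtlety is ensuring the $U_i$ each have $\Omega(\eps n/k)$ mass despite being formed from the leftover small clusters — this follows because discarding the $2k$ largest clusters leaves each remaining cluster with at most a $1/(2k)$-fraction of the discarded mass, hence at most roughly $n/(2k)$ vertices, so the leftover $\ge \eps n$ vertices can be packed into $k$ groups of comparable size by a greedy (bin-packing) argument.
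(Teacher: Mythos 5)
Your plan follows the paper's strategy at a high level (padded partition via \autoref{thm:rkpad}, merge the leftover clusters into balanced groups, build localized bump functions, invoke \autoref{fact:lambda}), and you correctly identify the Lipschitz-squared estimate $(f_i(u)-f_i(v))^2 = O\bigl((\alpha/\radius)^2\,\hd_\bz(u,v)^2\bigr)$ as the crux. However, there is a genuine gap in the final step. You build exactly $k$ groups $U_1,\dots,U_k$ and then argue ``by averaging there is \emph{some} $i$ with small $\cR(f_i)$,'' concluding $\lambda_k \le 2\cR(f_i)$. This misreads \autoref{fact:lambda}: that fact bounds $\lambda_k$ by $2\max_i \cR(f_i)$ over $k$ disjointly supported functions, not by the Rayleigh quotient of a single well-behaved one. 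Finding one good function among your $k$ gives nothing about $\lambda_k$; an adversary could make the other $k-1$ have huge Rayleigh quotients. The paper fixes this by merging into $2k$ groups (each with interior mass $\gtrsim \eps n/k$), bounding the \emph{sum} $\sum_{i=1}^{2k}\sum_{(u,v)\in E}(f_i(u)-f_i(v))^2$, and then applying Markov's inequality to discard the worst half: this yields $k$ disjointly supported functions \emph{each} with Rayleigh quotient $O(\alpha^2 \cE{\sd_\bz}/\eps\radius^2)$, which is what \autoref{fact:lambda} actually requires. You would need to double your group count (or accept a bound on $\lambda_{k/2}$, which is not what is claimed).

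A secondary issue is the test function. You propose $f_i(v) = \pr[B_{\hd_\bz}(v,\radius/\alpha) \subseteq \mathcal P(v) \text{ and } \mathcal P(v)\subseteq U_i]$, where $\mathcal P$ is the random partition but $U_i$ is built from a single sampled realization $P$. Nothing couples these two, so $f_i(v)$ can be positive for $v\notin U_i$ and the supports need not be disjoint, which is exactly the property you need. The paper avoids this by working entirely with the sampled partition: after ordering and merging by \emph{interior} size (the interior $\tilde S$ being the set of $v$ with $B_{\hd_\bz}(v,\radius/\alpha)\subseteq S$), it sets $f_i(v) := \max\bigl(0,\, 1 - \alpha\,\hd_\bz(v,\tilde T_i)/\radius\bigr)$. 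This deterministic bump is $1$ on $\tilde T_i$, is $(\alpha/\radius)$-Lipschitz, and vanishes outside $T_i$ because any point within distance $\radius/\alpha$ of $\tilde T_i$ already lies in $T_i$; disjoint supports and the mass lower bound $\sum_v f_i^2(v)\ge|\tilde T_i|\ge \eps n/(8k)$ are then immediate. Your hedge ``or a closely related localized bump'' gestures at this, but the probabilistic version as written does not give disjoint supports, and the mass lower bound also needs to be stated in terms of the \emph{interiors}, which is why the paper sorts and merges by $|\tilde S_i|$ rather than by $|S_i|$ as you do.
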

\begin{proof}
We show that if we can not find $2k$ sets of small diameter covering almost all of the vertices, then we can construct $k$ disjointly supported function of Rayleigh quotient $O(h^2 \cE{\sd_\bz} /\eps^3 \radius^2)$. Then by \autoref{fact:lambda} we get a contradiction
to the assumption that $\lambda_k$ is large. The proof has three main steps:
random partitioning, merging into dense regions, and construction of disjointly supported functions.

{\bf i) Random partitioning.} 
By \autoref{thm:rkpad} there exists a $(\radius, \alpha, 1-\eps/4)$-padded random
partitioning for the metric space $(V,\hd_\bz)$, for $\alpha\lesssim h/\eps$. 
For $S\subseteq V$, the {\em interior} of $S$ is the set
$$ \tilde{S}:=\{v: B_{\hd_\bz}(v,\radius/\alpha)\subseteq S\}.$$
By linearity of expectations and Markov inequality, with probability at least $1/2$ we sample a partition $P$ such that, for 
each $S\in P$, $\diam(S,\hd_\bz) \leq \radius$, and
\begin{equation}
\label{eq:lambdacovering}
 \sum_{S\in P} |\tilde{S}| \geq (1-\eps/2)n.
 \end{equation}
Choose a labeling $S_1,S_2,\ldots$ of the sets of $P$ such that
\begin{equation}
\label{eq:lambdaordering}
 |\tilde{S}_1| \geq |\tilde{S}_2| \geq \ldots \geq |\tilde{S}_{|P|}|.
 \end{equation}

If $\sum_{i=1}^{2k} |\tilde{S}_i| \geq (1-\eps)n$, then  $S_1,\ldots,S_{2k}$ form $2k$ sets of diameter $\radius$ covering $(1-\eps)n$ 
vertices, and we are done. Otherwise, we get a contradiction to the assumption that $\lambda_k$ is large.
Using equation \eqref{eq:lambdacovering}, for the rest of the proof we assume
\begin{equation}
\label{eq:lambdacase}
\sum_{i>2k} |\tilde{S}_i| \geq \eps n/2.
\end{equation}

Next, we merge the sets in $P$ greedily into $2k$ disjoint sets $T_1,\ldots,T_{2k}$ such that
$\min_{1\leq i\leq 2k} |\tilde{T}_i|\geq \frac{\eps n}{8k}$. Then, we construct 2k disjointly supported functions $f_1,\ldots,f_{2k}$ where $\supp(f_i) \subseteq T_i$. 
Finally, We show that the best $k$ of these $2k$ functions have Rayleigh quotient $O(\alpha^2 \cE{\sd_\bz}/\eps\radius^2)$.

{\bf ii) Merging.} We start by constructing $T_1,\ldots,T_{2k}$. First let $T_i=S_i$, for $1\leq i\leq 2k$. Then, iteratively, for each $i>2k$ we merge $S_i$ with the set with smallest interior, $\argmin_{T_j} |\tilde{T}_j|$. 
In the next claim we show that, by the end of the algorithm, all $T_i$ have at least
$\eps n/8k$ vertices in their interior.
\begin{claim}
\label{cl:merging}
$$\min_{1\leq i\leq 2k}|\tilde{T}_i| \geq \eps n/8k.$$
\end{claim}
\begin{proof}
If $|\tilde{S}_{2k}|\geq \eps n/8k$, then we are done, since, by equation \eqref{eq:lambdaordering}, for all $1\leq i\leq 2k$,
$$ |\tilde{T}_i|\geq |\tilde{S}_i| \geq |\tilde{S}_{2k}|.$$
Otherwise,  we have $\tilde{S}_{2k+1} < \eps n/8$. By equation \eqref{eq:lambdacase},
$$ \sum_{i=1}^{2k} |\tilde{T}_i \setminus \tilde{S}_i | \geq \eps n/2.$$
Therefore, for some $1\leq j\leq 2k$, $|\tilde{T}_j\setminus \tilde{S}_j| \geq \eps n/4$.
Let $S_l$ be the last set that is merged with $T_j$. 
Since $S_l$ is merged with $T_j$, $T_j$ is  the set with the smallest interior 
 at the time of merging with $S_l$. Therefore, 
 $$ \min_{1\leq i\leq 2k} |\tilde{T}_i| \geq |\tilde{T}_j \setminus \tilde{S}_l| \geq |\tilde{T}_j \setminus \tilde{S}_j \setminus \tilde{S}_l| \geq \frac{\eps n}{4} - \frac{\eps n }{8} = \frac{\eps n}{4}.$$
\end{proof}
{\bf iii) Construction of disjointly supported functions.} It remains to construct $k$ disjointly supported functions of small Rayleigh quotient.
For each $1 \leq i\leq 2k$, let $f_i: V\rightarrow \mathbb{R}$ be defined as
$$ f_i(v):=\max(0, 1-\alpha \cdot \hd_\bz(v,\tilde{T}_i)/\radius).$$
Note that for $v\in \tilde{T}_i$, $f_i(v)=1$,  and the value of $f_i(v)$ decreases smoothly with the distance of $v$ from $\tilde{T}_i$; in particular, $f_i(v)=0$, for $v\notin T_i$.
It follows from the above definition that each $f_i$ is $\alpha/\radius$-lipschitz w.r.t. $\hd_\bz(.,.)$, that is, for all $u,v\in V$ and $1\leq i\leq 2k$,
$$  |f_i(u)-f_i(v)| \leq \frac{\alpha}{\radius} \hd_\bz(u,v).$$
Since $T_1,\ldots,T_{2k}$ are disjoint, $f_1,\ldots,f_{2k}$ are disjointly supported.
Therefore,
$$ \sum_{i=1}^{2k} \sum_{(u,v)\in E} (f_i(u)-f_i(v))^2 
= \frac{2\alpha^2}{\delta^2}\sum_{(u,v)\in E} \sd_\bz(u,v).$$
By Markov inequality, after relabeling, we get $k$ disjointly supported functions $f_1,\ldots,f_k$ such
that for all $1\leq i\leq k$,
$$ \sum_{(u,v)\in E} (f_i(u)-f_i(v))^2 \leq \frac{2\alpha^2}{\delta^2 k} \sum_{(u,v)\in E} \sd_\bz(u,v).$$
By an application of \autoref{fact:lambda} we get
\begin{eqnarray*} \lambda_k \leq 2\max_{1\leq i\leq k} \cR(f_i) = 2\max_{1\leq i\leq k} \frac{\sum_{(u,v)\in E} (f_i(u)-f_i(v))^2}{r\cdot \sum_v f_i^2(v)} &\leq&
\frac{\frac{4\alpha^2}{\radius^2\cdot k} \sum_{(u,v)\in E} \sd_\bz(u,v)}{  r\cdot \eps n /8k}\\
& =& \frac{64\alpha^2\cE{\sd_\bz}}{\eps\cdot \radius^2} \lesssim \frac{h^2\cE{\sd_\bz}}{\eps^3\cdot \radius^2 }
\end{eqnarray*}
Therefore,  $\lambda_k \leq c''_1 h^2 \cE{\sd_\bz} / \eps^3\radius^2$ for a universal constant $c''_1$. Letting $c_1' > c_1''$ we get a contradiction.
\end{proof}

The following lemma is the last step of the proof. We show that sets of small diameter in $O(\log k)$ dimension space
corresponds to sets of small diameter in the original space of the ARV solution.
\begin{lemma}
\label{lem:projectback}
Let $d(.,.), d'(.,.)$ be two distance functions on $V$.
Let $S_1,\ldots,S_{2k}\subseteq V$ be
such that $\diam(S_i,d)\leq \radius$ for all $1\leq i\leq 2k$, and let $S:=\cup_{i=1}^{2k} S_i$. 
If, for $\eps>0$,
$$|\{(u,v): d'(u,v) > 2d(u,v)\}| \leq \frac{\eps^3n^2}{64 k^2},$$ 
then, there are $2k$ sets $T_1,\ldots,T_{2k}$ of diameter $4\radius$ w.r.t. to $d'$ that cover all but $\eps n$ of the vertices of $S$.
\end{lemma}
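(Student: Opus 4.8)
The plan is to build each $T_i$ as a small ``sub-ball'' of $S_i$ around a well-chosen pivot and to bound the total number of discarded vertices by a counting argument on the pairs that the two distances disagree on. First I would reduce to the case in which $S_1,\dots,S_{2k}$ are pairwise disjoint: if not, reassign every vertex of $S$ to a single set containing it. This does not change $S=\bigcup_i S_i$, it only shrinks each $S_i$ (so $\diam(S_i,d)\le\radius$ is preserved), and — this is the point — it makes the ``bad'' pairs lying inside the various $S_i$ pairwise distinct. Call a pair $(u,v)$ \emph{bad} if $d'(u,v)>2\,d(u,v)$, and for each $i$ let $G_i$ be the graph on $S_i$ whose edges are the bad pairs inside $S_i$; set $n_i=|S_i|$ and $B_i=|E(G_i)|$. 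In $S_i$ pick $p_i$ to be a minimum-degree vertex of $G_i$ and let $T_i:=\{v\in S_i:\ (p_i,v)\text{ is not bad}\}$, so $p_i\in T_i$. For $v\in T_i$ we have $d'(p_i,v)\le 2\,d(p_i,v)\le 2\radius$, since $p_i,v\in S_i$; hence, by the triangle inequality for $d'$, $d'(u,v)\le d'(u,p_i)+d'(p_i,v)\le 4\radius$ for all $u,v\in T_i$, so $\diam(T_i,d')\le 4\radius$ as required. (In our applications $d'$ is $\sd_\bx$, a metric of negative type, so the triangle inequality is available; the proof uses only that $d'$ obeys it.)

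It remains to bound the number of uncovered vertices. Since the $S_i$ are disjoint, every bad pair contributing to some $B_i$ is counted once, so $\sum_i B_i$ is at most the total number of bad pairs, i.e. $\sum_i B_i\le \eps^3 n^2/(64k^2)$. Because $p_i$ has minimum degree in $G_i$ we get $|S_i\setminus T_i|=\deg_{G_i}(p_i)\le 2B_i/n_i$, and also trivially $\deg_{G_i}(p_i)\le n_i$, hence $|S_i\setminus T_i|\le \min\{n_i,\,2B_i/n_i\}\le\sqrt{2B_i}$. Therefore, by Cauchy--Schwarz (and assuming $\eps\le 1$, the statement being trivial otherwise),
\[ \Bigl|S\setminus\textstyle\bigcup_{i}T_i\Bigr|\ =\ \sum_{i=1}^{2k}|S_i\setminus T_i|\ \le\ \sum_{i=1}^{2k}\sqrt{2B_i}\ \le\ \sqrt{2}\cdot\sqrt{2k}\cdot\Bigl(\textstyle\sum_i B_i\Bigr)^{1/2}\ \le\ 2\sqrt{k}\cdot\frac{\eps^{3/2}n}{8k}\ =\ \frac{\eps^{3/2}n}{4\sqrt{k}}\ \le\ \eps n. \]
This yields the $2k$ sets $T_1,\dots,T_{2k}$, and the construction (disjointifying, forming the $G_i$, picking minimum-degree pivots) is deterministic and polynomial time, so in particular it meets the ``randomized polynomial time'' requirement used in the structure theorems.

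The only genuine subtlety is that one cannot afford to discard $\approx 2B_i/n_i$ vertices from an $S_i$ that happens to be tiny (the per-set size $n_i$ is not under our control); this is handled exactly by the estimate $\deg_{G_i}(p_i)\le\min\{n_i,2B_i/n_i\}\le\sqrt{2B_i}$, which trades the size of $S_i$ against its bad-pair count before the final Cauchy--Schwarz. Everything else is bookkeeping; note in passing that there is slack in the hypothesis, since the argument already goes through under the weaker bound $|\{(u,v):d'(u,v)>2\,d(u,v)\}|\le \eps^2 n^2/(16k)$.
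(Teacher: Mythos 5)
Your argument is correct, and it takes a genuinely different route from the paper's. The paper classifies each $S_i$ as \emph{good} (some vertex $u\in S_i$ has $|B_{d'}(u,2\radius)\cap S_i|\ge(1-\eps/2)|S_i|$) or \emph{bad}, keeps the covering ball for good sets, discards bad sets entirely, and then shows via a distorted-pair count that $\sum_{\text{bad }i}|S_i|\le\eps n/2$, using the observation that each bad set contributes $\ge\eps|S_i|^2/4$ distorted pairs together with a power-mean inequality and $2k-|A|\le 2k$. You instead prune \emph{every} set individually: you build the graph $G_i$ of distorted pairs inside $S_i$, take a minimum-degree pivot $p_i$, keep the vertices not distorted against $p_i$, bound the loss by $\min\{n_i,2B_i/n_i\}\le\sqrt{2B_i}$, and finish with Cauchy--Schwarz. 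Your version is somewhat sharper (it already works under the weaker hypothesis $\eps^2 n^2/(16k)$ rather than $\eps^3 n^2/(64k^2)$) and avoids the good/bad dichotomy and the threshold parameter $\eps/2$; the paper's version has the minor advantage of not requiring a handle on per-set degree distributions, only on total set sizes. One shared subtlety worth noting: both proofs (yours and the paper's) obtain diameter $4\radius$ for $T_i$ by applying the triangle inequality to $d'$, even though the lemma statement calls $d'$ merely a ``distance function''; in the paper's applications $d'$ is $\sd_\bx$ for an ARV-feasible $\bx$ or $\hd_\by$ for Euclidean $\by$, so this is fine, but you are right to flag it explicitly, as the paper does not.
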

\begin{proof}
Wlog we assume that $S_1,\ldots,S_{2k}$ are disjoint (we can simply delete multiple copies of a single vertex).
We say a set $S_i$ is {\em good}, if there exists $u\in S_i$ such that
$$ |B_{\d'}(u,2\radius) \cap S_i| \geq (1-\eps/2)|S_i|,$$
and it is {\em bad} otherwise. We use $A$ to denote the indices of good sets. By the above definition, for each good set $S_i$, there is a ball $B_{d'}(u,2\radius)$  such that $|B_{d'}(u,2\radius)\cap S_i|\geq (1-\eps/2)|S_i|$. 
We define $T_i$ to be such ball if $i\in A$ and $T_i=\emptyset$ otherwise. Then,
$$ \Big|\bigcup_{i=1}^{2k} T_i\Big| \geq \sum_{i=1}^{2k} |T_i \cap S_i| \geq \sum_{i\in A} |S_i|(1-\eps/2).$$
Since $|T|\leq n$, to prove the lemma it is sufficient to show that
\begin{equation}
\label{eq:goodsets} \sum_{i \in A} |S_i| \geq |T|-\eps n/2. 
\end{equation}
Equivalently, we can show $\sum_{i\notin A} |S_i|\leq \eps n/2$.
We say that a pair of vertices $u,v$ is {\em distorted}, if $d'(u,v)>2d(u,v)$. Observe
that each bad set $S_i$ contains $\eps |S_i|^2/4$ distorted pairs. This is because 
 for each vertex $u\in S_i$ we have $S_i \subseteq B_{\d}(u,\radius)$, but at least $\eps |S_i|/2$ of the vertices of $S_i$
are not contained in the ball $B_{\d'}(u,2\radius)$.

Since $S_1,\ldots,S_{2k}$ are disjoint, the total number of distorted pairs is at least,
$$ \sum_{i\notin A} \frac{\eps |S_i|^2}{4} \geq \frac{\eps}{4} \Big( \frac{\sum_{i\notin A} |S_i|}{2k-|A|}\Big)^2 \geq \frac{\eps}{16k^2} \left(\sum_{i\notin A} |S_i|\right)^2.$$
On the other hand, by the lemma's assumption, the number of distorted pairs is at most 
$\eps^3 n^2/64k^2$. This implies that $\sum_{i\notin A} |S_i| \leq \eps n/2$ which
proves \eqref{eq:goodsets} and 
$$ \sum_{i=1}^{2k} |T_i| \geq |S|-\eps n.$$
\end{proof}

\begin{proofof}{\autoref{thm:structlambda}}
First, by \autoref{lem:dimreduction}, there are vectors $\{\bz_v\}_{v\in V}\subseteq \mathbb{R}^h$,
where $h=\Theta( \log k \log 1/\eps)$,
such that $\cE{\sd_\bz}\leq 4\cE{\sd_\bx}$, and
\begin{equation}
\label{eq:distortionlambda}
\big| \{ (u,v): \sd_\bx(u,v) > 2 \sd_\bz(u,v) \} \big| \leq  \frac{\eps^3 n^2}{512 k^2 }.
\end{equation}

Let $c_1:=\frac{64 c_1'  h^2}{(\log k\log 1/\eps)^2} $. Then, since $\cE{\sd_\bz}\leq 4\cE{\sd_\bx}$, the assumption of the theorem implies 
$$ \lambda_k \geq c_1' \cdot \frac{h^2}{(\eps/2)^3 (\sqrt{\radius/4})^2} \cE{\sd_\bz}.$$
By \autoref{prop:partitioninglambda}, there are $2k$ sets $S_1,\ldots,S_{2k}$
such that $\diam(S_i,\sd_\bz) = \left(\diam(S_i,\hd_\bz)\right)^2 \leq \radius/4$ for $1\leq i\leq 2k$, and they cover $(1-\eps/2)n$ vertices of $G$. 
Finally, by \eqref{eq:distortionlambda}, we can apply \autoref{lem:projectback} with $\d = \sd_\bz, \d' = \sd_\bx$, we get $2k$ sets
$T_1,\ldots,T_{2k}$ with diameter $\diam(T_i,\sd_\bx)\leq \radius$ that cover 
$$(1-\eps/2) \Big| \bigcup_{i=1}^{2k} S_i\Big| \geq (1-\eps/2) (1-\eps/2)n \geq (1-\eps)n$$ 
vertices of $G$. Observe that all steps of the analysis are constructive, so we have a polynomial time randomized algorithm for finding $T_1,\ldots,T_{2k}$.
\end{proofof}

\subsection{$\phi_k$ structure theorem} 

The proof of the next proposition is very similar to the proof of  \autoref{prop:partitioninglambda}.
The main difference is that we use Lipschitz partitioning instead of padded partitioning. This is because we just want to find $k$ disjoint non-expanding sets, so we just need that very few edges are cut by the random partitioning. 
\begin{proposition}
\label{prop:partitioningphi}
There is a universal constant $c'_2$ such that for any sequence of vectors $\{\bz_v\}_{v\in V}\in \mathbb{R}^h$, and $0 < \eps,\radius<1/2$, if
$$ \phi_k \geq c'_2 \cdot \frac{\sqrt{h}}{\eps \cdot \radius} \cE{\hd_\bz}, $$
then there are $2k$ sets $T_1,\ldots,T_{2k}$ such that $\diam(T_i,\hd_\bz)\leq \radius$
for $1\leq i\leq 2k$, and
$$ \Big| \bigcup_{i=1}^{2k} T_i\Big| \geq (1-\eps)n.$$
\end{proposition}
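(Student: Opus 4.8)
The plan is to run the same three-step scheme as in \autoref{prop:partitioninglambda} --- partition, merge, derive a contradiction --- but with the padded decomposition replaced by a Lipschitz one, which is exactly what makes the loss $O(\sqrt h)$ rather than $O(h^2)$, and it also removes the need to smooth indicator functions. \emph{Step 1 (partition).} By \autoref{thm:RkLip}, the metric space $(V,\hd_\bz)$ admits a $(\radius,L)$-Lipschitz random partition $\mathcal P$ with $L = O(\sqrt h)$. Sampling $\mathcal P$, the expected number of cut edges is
\[ \sum_{(u,v)\in E}\P{\mathcal P(u)\neq\mathcal P(v)} \;\leq\; \frac{L}{\radius}\sum_{(u,v)\in E}\hd_\bz(u,v) \;=\; \frac{2Lrn}{\radius}\,\cE{\hd_\bz}, \]
so by Markov's inequality, with probability at least $1/2$ we obtain a partition $P$, every part of which has diameter at most $\radius$ (w.r.t.\ $\hd_\bz$), with at most $m_{\mathrm{cut}} := \frac{4Lrn}{\radius}\cE{\hd_\bz}$ cut edges. (If $|P| < 2k$ the conclusion is immediate, so assume $|P|\geq 2k$.)

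\emph{Step 2 (merge).} Order the parts $S_1,S_2,\dots$ of $P$ so that $|S_1|\geq|S_2|\geq\cdots$. If $\sum_{i\leq 2k}|S_i|\geq(1-\eps)n$, then $S_1,\dots,S_{2k}$ are $2k$ sets of diameter at most $\radius$ covering a $(1-\eps)$ fraction of $V$, and we take these as the $T_i$ of the statement. Otherwise $\sum_{i>2k}|S_i| > \eps n$, and I would greedily merge: initialize $T_i := S_i$ for $i\leq 2k$, and for each $i > 2k$ in turn add $S_i$ to whichever $T_j$ currently has the smallest cardinality. Re-running the counting argument of \autoref{cl:merging} verbatim with cardinalities $|T_i|$ in place of the interior cardinalities $|\tilde T_i|$ yields $\min_{1\leq i\leq 2k}|T_i| \geq \eps n/(8k)$; and the $T_i$ are disjoint and cover $V$.

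\emph{Step 3 (contradiction).} This is the only real departure from the $\lambda_k$ proof: because $T_1,\dots,T_{2k}$ partition $V$, every edge counted by $|E(T_i,\overline{T_i})|$ is a cut edge of $P$, and every cut edge of $P$ is counted by exactly two of the quantities $|E(T_1,\overline{T_1})|,\dots,|E(T_{2k},\overline{T_{2k}})|$; hence $\sum_{i=1}^{2k}|E(T_i,\overline{T_i})| \leq 2 m_{\mathrm{cut}}$ --- no smoothing of indicator functions is needed. By averaging, after relabeling, $k$ of the sets satisfy $|E(T_i,\overline{T_i})|\leq 2 m_{\mathrm{cut}}/k$, and combined with $|T_i|\geq\eps n/(8k)$ this gives
\[ \phi(T_i) \;=\; \frac{|E(T_i,\overline{T_i})|}{r\,|T_i|} \;\leq\; \frac{2 m_{\mathrm{cut}}/k}{r\cdot \eps n/(8k)} \;=\; \frac{16\, m_{\mathrm{cut}}}{r\eps n} \;=\; \frac{64 L}{\eps\radius}\,\cE{\hd_\bz} \;=\; O\!\left(\frac{\sqrt h}{\eps\radius}\,\cE{\hd_\bz}\right). \]
So we have exhibited $k$ disjoint sets of expansion $O(\sqrt h\,\cE{\hd_\bz}/(\eps\radius))$, i.e.\ $\phi_k = O(\sqrt h\,\cE{\hd_\bz}/(\eps\radius))$, which contradicts the hypothesis $\phi_k \geq c'_2\cdot\frac{\sqrt h}{\eps\radius}\cE{\hd_\bz}$ once $c'_2$ is chosen larger than the hidden constant. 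Thus the ``otherwise'' branch of Step 2 is impossible, proving the proposition; since sampling a Lipschitz partition, sorting, and greedy merging are all polynomial time, this also yields the randomized polynomial-time algorithm.

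I do not expect a genuine obstacle: the argument is essentially the $\lambda_k$ proof with a cheaper decomposition. The only points requiring care are (i) transcribing the merging bound of \autoref{cl:merging} to plain cardinalities and checking the constant $\eps n/(8k)$, and (ii) the boundary double-counting $\sum_i|E(T_i,\overline{T_i})|\leq 2 m_{\mathrm{cut}}$, which genuinely uses that the merged sets form a partition of $V$ and not merely a cover of a large fraction of it.
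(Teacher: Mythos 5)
Your proposal is correct and follows essentially the same three-step argument as the paper's proof: a $(\radius,O(\sqrt h))$-Lipschitz random partition via \autoref{thm:RkLip}, greedy merging into $2k$ parts each of size $\Omega(\eps n/k)$, and averaging to extract $k$ disjoint sets of expansion $O(\sqrt h\,\cE{\hd_\bz}/(\eps\radius))$, contradicting the hypothesis on $\phi_k$. If anything you are slightly more careful than the paper on the double-counting step, where the paper writes $\sum_i|E(T_i,\overline{T_i})|\leq|\{(u,v)\in E:P(u)\neq P(v)\}|$ without the factor of $2$; your version with $2m_{\mathrm{cut}}$ is the correct bookkeeping, and both are absorbed into the universal constant.
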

\begin{proof}
We show that if we cannot find $2k$ regions of small diameter covering almost all vertices
then we can construct $k$ disjoints sets of small expansion.

{\bf i) Random partitioning.}
By \autoref{thm:RkLip} $(V,\hd_\bz)$ admits a $(\radius,\alpha)$-Lipschitz random partition
$\cP$ for $\alpha=O(\sqrt{h})$ such that for all pair of vertices $u,v$
$$ \P{\cP(u) \neq \cP(v)} \leq \alpha \cdot \frac{\hd_\bz(u,v)}{\radius}.$$
By linearity of expectation and Markov inequality, with probability at least $1/2$ we sample  a partition $P$ such that for all $S\in P$, $\diam(S,\hd_\bz)\leq \radius$, and
\begin{equation}
\label{eq:cutedges}
|\left\{(u,v)\in E: P(u) \neq P(v)\right\}| \leq \frac{2\alpha}{\radius} \sum_{(u,v)\in E} \hd_\bz(u,v).
 \end{equation}
Choose a labeling $S_1,S_2,\ldots$ of the sets in $P$ such that
\begin{equation}
\label{eq:phiordering}
 |S_1| \geq |S_2| \geq \ldots \geq |S_{|P|}|.
 \end{equation}
If $\sum_{i=1}^{2k} |S_i| \geq (1-\eps) n$, then $S_1,\ldots, S_{2k}$ are $2k$ sets
of diameter $\radius$ covering $(1-\eps)n$ vertices and we are done.
Otherwise, we get a contradiction to the assumption that $\phi_k$ is large.
For the rest of the proof we assume
\begin{equation}
\label{eq:sums}
 \sum_{i>2k} |S_i| \geq \eps n.
 \end{equation}
Next, we merge the sets into $2k$ disjoint sets $T_1,\ldots,T_{2k}$ such that 
each $T_i$ covers at least $\eps n/4k$ vertices of $G$. Then, we show that 
at least $k$ of these sets have expansion $O(\alpha \cE{\hd_\bz}/\eps\radius)$.

{\bf ii) Merging.}
This part is very similar to the Merging part of \autoref{prop:partitioninglambda}. We start by constructing $T_1,\ldots,T_{2k}$. First let $T_i=S_i$ for $1\leq i\leq 2k$. Then, iteratively, for each $i>2k$ merge $S_i$
with $\argmin_{T_i} |T_i|$. 
We show that by
the end of this procedure 
$$
\min_{1\leq i\leq 2k} |T_i|\geq \frac{\eps n}{4k}.
$$
In particular,  if $|S_{2k+1}|\geq \eps n/4k$, then by \eqref{eq:phiordering}
we get the above equation.
Otherwise,
by equation \eqref{eq:sums}, 
there is a set $T_j$ such that $|T_j\setminus S_j|\geq \eps n/2k$. If $S_l$ is the
last set merged with $T_j$, by the time $S_l$ was merging with $T_j$, $T_j$ has the least number of vertices. Therefore,
 $$\min_{1\leq i\leq 2k} |T_i| \leq |T_j\setminus S_l| \geq |T_j\setminus S_j \setminus S_l| \geq \eps n/4k.$$

{\bf iii) Construction of non-expanding sets.} By equation \eqref{eq:cutedges}
$$ \sum_{i=1}^{2k} |E(T_i,\overline{T_i})| 
\leq |\left\{ (u,v)\in E: P(u)\neq P(v)\right\}|
\leq \frac{2\alpha}{\radius} \sum_{(u,v)\in E} \hd_\bz(u,v).$$
Therefore, by Markov inequality, after relabeling, we get $k$ disjoint sets $T_1,\ldots,T_k$
such that for all $1\leq i\leq k$,
$$ \phi(T_i) = \frac{|E(T_i,\overline{T_i})|}{d\cdot |T_i|} \leq \frac{\frac{2\alpha}{\radius\cdot k} \sum_{(u,v)\in E} \hd_\bz(u,v)}{ d\cdot \eps n/4k} = \frac{16\alpha}{\eps\cdot \radius} \cE{\hd_\bz} \lesssim \frac{\sqrt{h}}{\eps\cdot\radius}\cE{\hd_\bz}.$$
Therefore, $\phi_k \leq c''_2 \cdot \sqrt{h} \cE{\hd_\bz} / \eps \radius$ where $c''_2$ is a universal constant. Letting $c'_2>c''_2$ we get a contradiction. 
\end{proof}

\begin{proofof}{\autoref{thm:structphi}}
Since $\{\bx_v\}_{v\in V}$ form a negative type metric, by \cite{ALN05}, there
are vectors $\{\by_v\}_{v\in V}$ such that for all $u,v\in V$,
\begin{equation}
\label{eq:aln}
\norm{\bx_v-\bx_u}^2 \leq \norm{\by_v-\by_u} \leq \alpha\norm{\bx_v-\bx_u}^2.
\end{equation}
where $\alpha\lesssim \sqrt{\log n}\cdot \log\log n$.
By above equation $\cE{\hd_\by}\leq \alpha \cE{\sd_\bx}$.
By \autoref{lem:dimreduction} we can reduce the dimension to $O(\log k)$. There are vectors $\{\bz_v\}_{v\in V}\subseteq \mathbb{R}^h$
for $h = \Theta( \log k \log 1/\eps)$
such that $\cE{\hd_\bz}\leq 4\cE{\hd_\by}$, and
\begin{equation}
\label{eq:distortionphi}
\big| \{ (u,v): \hd_\by(u,v) > 2 \hd_\bz(u,v) \} \big| \leq  \frac{\eps^3 n^2}{512 k^2 }.
\end{equation}

Let $c_2:= 32 c_2'  \cdot \alpha \cdot \sqrt{\frac{h}{\log k\log 1/\eps}} $. Then, since $\cE{\hd_\bz}\leq 4\alpha\cE{\sd_\bx}$ by theorem's assumption we get
$$ \lambda_k \geq c_2' \cdot \frac{\sqrt{h}}{(\eps/2) (\radius/4)} \cE{\hd_\bz}.$$
By \autoref{prop:partitioningphi}, there are $2k$ sets $S_1,\ldots,S_{2k}$
such that $\diam(S_i,\hd_\bz) \leq \radius/4$ for $1\leq i\leq 2k$, and they cover $(1-\eps/2)n$ vertices of $G$. 
By \eqref{eq:distortionphi}, we can apply \autoref{lem:projectback} with $\d = \hd_\bz, \d' =\hd_\by$, we get $2k$ sets
$T_1,\ldots,T_{2k}$ with diameter $\diam(T_i,\hd_y)\leq\radius$ that cover
$$(1-\eps/2) \Big| \bigcup_{i=1}^{2k} S_i\Big| \geq (1-\eps/2) (1-\eps/2)n \geq (1-\eps)n$$ 
vertices of $G$. Finally using \eqref{eq:aln} we get that
the dimeter of each set $T_i$ is at most $\radius$ w.r.t. $\sd_\bx$,
$$ \diam(T_i,\sd_\bx) \leq \diam(T_i,\hd_\by) \leq \delta.$$
Observe that all steps of the analysis are constructive, so we have a polynomial time randomized algorithm for finding $T_1,\ldots,T_{2k}$.
\end{proofof}

%

\section{The Rounding Algorithms}
Linial, London, Robinovich \cite{LLR95} observed that by defining a {\em Frechet embedding} we can round any metric $d(.,.)$ into a set of small expansion.
\begin{fact}
\label{fact:frechet}
For any metric $d(.,.)$ that satisfies the triangle inequality, and any set $U\subseteq V$,
there is a polynomial time algorithm that finds a set of expansion
$$ \frac{\sum_{(u,v)\in E} |d(u,U)-d(v,U)|}{\frac rn\cdot \sum_{u,v} |d(u,U)-d(v,U)|}$$
\end{fact}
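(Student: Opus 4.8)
The plan is to use the one-dimensional \emph{Frechet embedding} $g:V\to\mathbb{R}_{\geq 0}$ defined by $g(v):=d(v,U)=\min_{w\in U}d(v,w)$, and to show that among the $O(n)$ threshold cuts of $g$ there is always one whose expansion is at most the claimed ratio. Since the threshold cuts can be listed and scored in polynomial time, the algorithm just outputs the best one. The argument behind the bound is the classical co-area/averaging argument of Leighton and Rao; the only point that needs a bit of care is that ``expansion'' here refers to sets of size at most $n/2$.

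First I would reduce to sweep cuts that automatically obey the size bound. Fix a median value $m$ of $g$ and, for $t\geq 0$, set $S_t^{+}:=\{v:g(v)>m+t\}$ and $S_t^{-}:=\{v:g(v)<m-t\}$, so that $|S_t^{\pm}|\leq n/2$ for all $t\geq 0$. Writing $g-m=g_{+}-g_{-}$ for the positive and negative parts, one has the pointwise identity $|g(u)-g(v)|=|g_{+}(u)-g_{+}(v)|+|g_{-}(u)-g_{-}(v)|$; combining this with the layer-cake formula $|h(u)-h(v)|=\int_0^{\infty}\bigl|\mathbb{I}[h(u)>t]-\mathbb{I}[h(v)>t]\bigr|\,dt$ applied to $h=g_{+}$ and $h=g_{-}$ rewrites the numerator as $\int_0^{\infty}\bigl(|E(S_t^{+},\overline{S_t^{+}})|+|E(S_t^{-},\overline{S_t^{-}})|\bigr)\,dt$ and the sum $\sum_{u,v}|g(u)-g(v)|$ as $\int_0^{\infty}\bigl(|S_t^{+}|\,|\overline{S_t^{+}}|+|S_t^{-}|\,|\overline{S_t^{-}}|\bigr)\,dt$, which is at most $n\int_0^{\infty}\bigl(|S_t^{+}|+|S_t^{-}|\bigr)\,dt$ since $|\overline{S_t^{\pm}}|\geq n/2$.

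Next I would carry out the averaging. Put $f(t):=|E(S_t^{+},\overline{S_t^{+}})|+|E(S_t^{-},\overline{S_t^{-}})|$ and $h(t):=r\bigl(|S_t^{+}|+|S_t^{-}|\bigr)$; the two rewrites show the numerator of the claimed bound equals $\int_0^{\infty}f$ while its denominator $\tfrac{r}{n}\sum_{u,v}|g(u)-g(v)|$ is at most $\int_0^{\infty}h$, so the claimed ratio is at least $\int f/\int h$. Because $\int f=\int (f/h)\cdot h$, some threshold $t$ satisfies $f(t)/h(t)\leq\int f/\int h$, and for that $t$ the mediant inequality forces at least one of $S_t^{+},S_t^{-}$ — call it $S$ — to satisfy $|E(S,\overline{S})|/(r|S|)\leq f(t)/h(t)$, which is at most the claimed ratio; and $|S|\leq n/2$ by construction, so $S$ is a cut of the required expansion. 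For the algorithm, note that as $t$ ranges over $[0,\infty)$ the nested families $\{S_t^{+}\}$ and $\{S_t^{-}\}$ take only $O(n)$ distinct values, so it suffices to enumerate these $O(n)$ sweep cuts and return the one of smallest expansion.

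The main (and rather mild) obstacle is exactly this handling of the $|S|\leq n/2$ constraint without loosening the bound: centering at the median keeps the complements large in the denominator, and the mediant step transfers the averaged ratio to one of the two one-sided sweeps. I would also remark that the triangle inequality assumed of $d$ is not actually used in this proof; it enters only in the applications of the fact, where one additionally uses that $g$ is $1$-Lipschitz, i.e.\ $|g(u)-g(v)|\leq d(u,v)$.
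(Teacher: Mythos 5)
Your proof is correct and is exactly the standard Leighton--Rao/LLR sweep-cut argument that the paper invokes by citation (it gives no proof of this fact itself); the median-centering and the split $g-m=g_+-g_-$ are a clean way to respect the $|S|\le n/2$ constraint without losing a constant. One small slip: when you bound $\int\bigl(|S_t^+||\overline{S_t^+}|+|S_t^-||\overline{S_t^-}|\bigr)\,dt \le n\int\bigl(|S_t^+|+|S_t^-|\bigr)\,dt$, the stated reason ``since $|\overline{S_t^{\pm}}|\geq n/2$'' should read ``since $|\overline{S_t^{\pm}}|\leq n$'' (the $\ge n/2$ bound is what makes $S_t^{\pm}$ valid output sets, not what drives this inequality) --- but the inequality itself is trivially true, so the argument stands.
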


Leighton, Rao~\cite{LR99} and Arora, Rao and Vazirani \cite{ARV04} used the above simple fact to show the following preprocessing step.

\begin{fact}[\cite{ARV04}] 
\label{fact:wellspread}
There is a polynomial time algorithm that given a feasible solution  $\{ x_v \}_{v\in V}$ to the ARV relaxation \eqref{eq:arv} of cost $\sdp$ either finds a cut of expansion $O(\sdp)$, or finds a set $W$ that is $(4,1/16)$-well spread.
\end{fact}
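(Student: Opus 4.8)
The statement is the preprocessing step of Leighton--Rao \cite{LR99} and Arora--Rao--Vazirani \cite{ARV04}; I recall the shape of its proof. The rounding engine is the Fr\'echet embedding of \autoref{fact:frechet}, applied to the distance $\sd_\bx(u,v):=\norm{\bx_u-\bx_v}^2$, which is a genuine metric by the triangle-inequality constraints of \eqref{eq:arv}. For any $U\subseteq V$ that embedding produces a cut of expansion at most
\[
\frac{\sum_{(u,v)\in E}|\sd_\bx(u,U)-\sd_\bx(v,U)|}{\frac rn\sum_{u,v}|\sd_\bx(u,U)-\sd_\bx(v,U)|}\ \le\ \frac{2n^2\cdot\sdp}{\sum_{u,v}|\sd_\bx(u,U)-\sd_\bx(v,U)|},
\]
because $|\sd_\bx(u,U)-\sd_\bx(v,U)|\le \sd_\bx(u,v)$ (triangle inequality) and $\sum_{(u,v)\in E}\sd_\bx(u,v)=2rn\cdot\sdp$. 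Hence it suffices to establish the dichotomy: either some set $U$ satisfies $\sum_{u,v}|\sd_\bx(u,U)-\sd_\bx(v,U)|=\Omega(n^2)$, in which case we output the corresponding cut, or there is a $(4,1/16)$-well-spread set.

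For the dichotomy, first pick a \emph{central} vertex $v_0$ with $\sum_u\sd_\bx(v_0,u)\le n$; such a vertex exists by averaging, since $\sum_{v_0}\sum_u\sd_\bx(v_0,u)=n^2$. Write $h(v):=\sd_\bx(v_0,v)$; from $n^2=\sum_{u,v}\sd_\bx(u,v)\le\sum_{u,v}(h(u)+h(v))=2n\sum_w h(w)$ we also get $\sum_w h(w)\ge n/2$. Now test the sets $U_t:=B_{\sd_\bx}(v_0,t)$: since $\sd_\bx(v,U_t)\ge h(v)-t$ for $v$ outside the ball and $\sd_\bx(v,U_t)=0$ inside it,
\[
\sum_{u,v}\big|\sd_\bx(u,U_t)-\sd_\bx(v,U_t)\big|\ \ge\ 2\,|B_{\sd_\bx}(v_0,t)|\cdot\!\!\sum_{v:\,h(v)>t}\!\!(h(v)-t).
\]
If some $t$ makes the right-hand side $\Omega(n^2)$ we are done; otherwise the values $\{h(v)\}$ are forced to concentrate at scale $\Theta(1)$. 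Concretely, choosing $t^*$ with $\sum_{v:h(v)>t^*}(h(v)-t^*)=n/4$ (possible since this quantity is continuous, non-increasing, at least $n/2$ at $t=0$, and tends to $0$), the no-cut assumption forces $|B_{\sd_\bx}(v_0,t^*)|$ to be a small fraction of $n$; together with $t^*\cdot|\{v:h(v)>t^*\}|\le\sum_w h(w)\le n$ this makes $t^*=O(1)$, and Markov's inequality applied to $\sum_w h(w)\le n$ shows that at most $n/2$ vertices lie at distance more than $2$ from $v_0$.

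It remains to exhibit the well-spread set. The natural candidate is $W=B_{\sd_\bx}(v_0,2)$: it has at least $n/2$ vertices and $\sd_\bx$-diameter at most $4$, so we only need $\sum_{u,v\in W}\sd_\bx(u,v)\ge n^2/16$. Suppose this fails. Then most of the total distance $n^2$ is carried by pairs with an endpoint outside $W$; such an endpoint is either within distance $t^*=O(1)$ of $v_0$ (a small set, contributing only a controllable amount) or at distance more than $2$ from $v_0$, and for the latter $\sd_\bx(v,W)\ge h(v)-2$, so \autoref{fact:frechet} applied with $U=W$ again produces a cut of expansion $O(\sdp)$ --- unless those far vertices themselves form a second low-diameter cluster sitting at distance $\approx 2$ from $v_0$, in which case $B_{\sd_\bx}(v_0,2)$ misses them but re-centering at any vertex of $W$ yields a ball of $\sd_\bx$-diameter at most $4$ containing essentially all vertices, whose internal distance is then $\approx n^2\ge n^2/16$. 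This last case split --- ruling out that a bounded-diameter region is internally clustered while the SDP mass hides on pairs straddling the boundary of a ball around the center --- is the technical heart of the argument, and it is also where the exact constants $(4,1/16)$ must be pinned down rather than merely $(O(1),\Omega(1))$; as in \cite{LR99,ARV04}, the cleanest route is to fix the center and the radius first and then chase the inequalities. Every object used above (the centroid $v_0$, the threshold $t^*$, the two candidate cuts, and the set $W$) is computable in polynomial time, so the algorithm is efficient.
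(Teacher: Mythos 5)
The paper states this as a Fact and cites \cite{ARV04} for it rather than proving it internally, so there is no in-paper proof to compare against; your task is effectively to reconstruct the Leighton--Rao / ARV preprocessing step. The skeleton you choose is the right one: pick a centroid $v_0$ with $\sum_u h(u)\leq n$ by averaging (and note $\sum_u h(u)\geq n/2$), lower-bound the Fr\'echet quantity of $U_t=B_{\sd_\bx}(v_0,t)$ by $2|U_t|\sum_{h(v)>t}(h(v)-t)$, and observe by Markov that $W=B_{\sd_\bx}(v_0,2)$ has at least $n/2$ vertices and $\sd_\bx$-diameter at most $4$. All of that is correct, and the factor $2n^2\cdot\sdp$ in your first display is consistent with the paper's normalization.

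The gap is exactly where you flag it: you never actually establish that, when all the Fr\'echet cuts fail, $\sum_{u,v\in W}\sd_\bx(u,v)\geq n^2/16$. Two concrete defects. First, the ``re-centering'' claim is false as stated: if $W$ misses a tight cluster sitting at squared distance $\approx 2$ from $v_0$, re-centering at an arbitrary $w\in W$ does not give a ball of $\sd_\bx$-diameter $\leq 4$ containing essentially everything --- the vertex $w$ can itself be at squared distance $2$ from $v_0$ on the opposite side, so the resulting diameter can be $\approx 6$. Second, the case split ``either far vertices stretch far out, or they sit in a single second cluster at distance $\approx 2$'' is not exhaustive: the mass outside $W$ can be spread over many scales, and the threshold $t^*$ you introduce is never actually used in the final inequality. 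A complete argument has to bound $\sum_{u\ \mathrm{or}\ v\notin W}\sd_\bx(u,v)$, and the natural triangle-inequality bound $\sd_\bx(u,v)\leq |h(u)-h(v)|+\diam(W,\sd_\bx)$ incurs an additive $\diam(W,\sd_\bx)\cdot|V\setminus W|^2$ term that is $\Theta(n^2)$ if $|V\setminus W|=\Theta(n)$. So one must first force $|V\setminus W|$ to be small and $h$ to be concentrated --- for instance via $\sum_{u,v}|h(u)-h(v)|\geq (n/2)\sum_u|h(u)-m|$ for a median $m$ of $h$, then Markov --- and then choose the ball radius, the cut-threshold constant, and the mass-outside bound jointly so that the three constraints (radius at most $2$, small outside mass, at least $n^2/16$ inside) cohere. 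That bookkeeping is precisely where the constants $(4,1/16)$ are earned, and it is the part your sketch leaves out.
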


Observe that if $W$ is a $(4,1/16)$-well spread set as in the conclusion of  \autoref{fact:wellspread}, and $A$ is obtained from $W$ by removing $n/128$ vertices
from $W$, then $A$ is still $(4,1/32)$-well spread. Let $\eps_1:=1/128$.
Therefore, given $2k$ disjoint sets $T_1,\ldots,T_{2k}$ covering
$(1-\eps_1)n$ vertices of $G$, and a $(4,1/16)$-well spread set $W$ as in the conclusion of \autoref{fact:wellspread}, the set $A:=W\cap (\cup_{i=1}^{2k} T_i)$ is $(4,1/32)$-well spread. 
Furthermore, we can write $A$ as $A=A_1\cup A_2\cup \ldots\cup A_{2k}$ 
where $A_i:=W\cup T_i$.

So, in the rest of this section we assume $A$ is $(4,1/32)$ well spread and {\em all} of its vertices
are covered by $2k$ sets $A_1,\ldots,A_{2k}$ of diameter $\radius$ according to $\d_{\bx}$.


\subsection{Rounding ARV}

Given a feasible solution to \eqref{eq:arv} that has a well-spread subset of vertices, by \autoref{fact:frechet} finding a good rounding
reduces to finding a set $U\subseteq A$ that is well-separated from the rest of the vertices of $A$.

\begin{fact} 
\label{fact:sepset}
Let $\{\bx_v\}_{v\in V}$ be a feasible solution to \eqref{eq:arv} of cost $\sdp$, and suppose that $A\subseteq V$
is a $(O(1),\Omega(1))$-well spread set. Then there is a polynomial time algorithm that given
$\{\bx_v \}_{v\in V}$ and a subset $U \subseteq A$, finds a cut of expansion at most 

\[ O(1) \cdot \frac {n^2 \cdot \sdp} { \sum_{u,v\in A} |\d_{\bx}(v,U)-\sd_\bx(u,U)|}. 
\]
\end{fact}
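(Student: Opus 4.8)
The plan is to reduce to Fact~\ref{fact:frechet} by using the map $d(u,U)$ itself as the Frechet-style embedding, and then to argue that the denominator appearing in Fact~\ref{fact:frechet} can be replaced, up to a constant, by the quantity $\sum_{u,v\in A}|\d_\bx(v,U)-\d_\bx(u,U)|$. First I would observe that the function $v\mapsto \d_\bx(v,U)=\min_{u\in U}\norm{\bx_u-\bx_v}^2$ is, under the triangle-inequality hypothesis on $\sd_\bx$, a $1$-Lipschitz function of $v$ with respect to the metric $\sd_\bx$ (the minimum of $1$-Lipschitz functions is $1$-Lipschitz). Hence for every edge $(u,v)\in E$ we have $|\d_\bx(u,U)-\d_\bx(v,U)|\le \sd_\bx(u,v)=\norm{\bx_u-\bx_v}^2$, and summing over edges gives
\[
\sum_{(u,v)\in E}|\d_\bx(u,U)-\d_\bx(v,U)| \;\le\; \sum_{(u,v)\in E}\norm{\bx_u-\bx_v}^2 \;=\; 2r\cdot n\cdot\sdp .
\]
So the numerator of the expansion bound from Fact~\ref{fact:frechet} is at most $O(rn\cdot\sdp)$.

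Next I would handle the denominator. Fact~\ref{fact:frechet}, applied with the $\ell_1$-embedding $v\mapsto \d_\bx(v,U)$, produces a set of expansion at most
\[
\frac{\sum_{(u,v)\in E}|\d_\bx(u,U)-\d_\bx(v,U)|}{\frac rn\sum_{u,v\in V}|\d_\bx(u,U)-\d_\bx(v,U)|}.
\]
Since all summands are nonnegative, restricting the sum in the denominator from all of $V\times V$ to $A\times A$ only decreases it, so the expansion is at most
\[
\frac{2r n\cdot\sdp}{\frac rn \sum_{u,v\in A}|\d_\bx(u,U)-\d_\bx(v,U)|}
= \frac{2n^2\cdot\sdp}{\sum_{u,v\in A}|\d_\bx(u,U)-\d_\bx(v,U)|},
\]
which is exactly the claimed bound $O(1)\cdot n^2\sdp / \sum_{u,v\in A}|\d_\bx(v,U)-\d_\bx(u,U)|$. (Strictly, one applies Fact~\ref{fact:frechet} to $\d_\bx(\cdot,U)$ viewed as a one-dimensional $\ell_1$ metric; this metric does satisfy the triangle inequality regardless of whether $\sd_\bx$ does, since it is an honest real-valued distance.) The running time is polynomial because Fact~\ref{fact:frechet} is, and computing $\d_\bx(v,U)$ for all $v$ takes polynomial time.

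The only place any care is needed — and the step I would flag as the main obstacle, though it is mild — is the Lipschitz bound $|\d_\bx(u,U)-\d_\bx(v,U)|\le\sd_\bx(u,v)$ for the numerator: this is exactly where the triangle inequality for $\sd_\bx$ (the ARV constraint) is used, and it is the reason the statement is about feasible solutions of \eqref{eq:arv} rather than arbitrary vectors. The role of the well-spread hypothesis is not in this derivation at all; it is used (elsewhere) to guarantee that one can \emph{choose} $U$ so that the denominator $\sum_{u,v\in A}|\d_\bx(v,U)-\d_\bx(u,U)|$ is large, i.e. so that the bound above is actually meaningful. For the present Fact, the well-spread property enters only to ensure $\sum_{u,v\in A}\sd_\bx(u,v)=\Omega(n^2)$ is available for downstream use, but the inequality itself holds as stated for any $A$ and any $U\subseteq A$.
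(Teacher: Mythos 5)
Your proposal is correct and follows exactly the intended route: apply Fact~\ref{fact:frechet} with $d=\sd_\bx$ (legitimate since the ARV constraints guarantee $\sd_\bx$ satisfies the triangle inequality), bound the numerator by $2rn\cdot\sdp$ via the $1$-Lipschitz property of $v\mapsto\d_\bx(v,U)$, and drop the denominator from $V\times V$ to $A\times A$. Your side observations are also accurate: the triangle inequality for $\sd_\bx$ is what powers the Lipschitz step, and the well-spread hypothesis plays no role in this Fact itself but is carried along so that later arguments can choose $U$ to make the denominator $\Omega(n^2/\sqrt{\log k})$.
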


Arora et al.~\cite{ARV04} prove that  a set $U$ such that $\sum_{u,v\in A} |\sd_\bx(v,U)-\sd_\bx(u,U)| \geq \Omega(n^2/\sqrt{\log n})$ always exists and can be found in polynomial time. This is best possible in general. 

Suppose, however, that $\{ \bx_v \}_{v\in A}$ concentrate in just $k$ distinct points. Then let $C\subseteq A$ be a set
of ``representatives,'' that is $|C|=k$ and we have that for every $v\in A$ there is a unique $u\in C$ such that $\bx_v=\bx_u$.
Let also $w(u)$ be the number of vertices $v\in A$ such that $\bx_v=\bx_u$. The condition that $A$ is well-spread
can be written as

\[ \sum_{u,v\in C} w(u) \cdot w(v)\cdot \I{\sd_\bx(u,v) \geq \Omega(1)} \geq \Omega(n^2) \]
and we are looking for a subset $U\subseteq C$ such that $\sum_{u,v\in C} w(u)w(v) |\sd_\bx(v,U)-\sd_\bx(u,U)| \geq \Omega(n^2/\sqrt{\log k}).$

\begin{lemma}[ARV with weights]
\label{lem:arvweighted}
Let  $\{\bx_u\}_{u\in C}$ be a sequence of vectors assigned to vertices of $C$ that form a metric of 
negative type.
For any set of weights $w: C\rightarrow \mathbb{R}_+$,
if 
$$\sum_{u,v} w(u) w(v) \I{ \sd_\bx(u,v) \geq \Delta} = \alpha \cdot n^2,$$
then there is a set $U\subseteq C$ such that 
$$\sum_{u,v} w(u)w(v) |\sd_\bx(u,U)-\sd_\bx(v,U)| \gtrsim \alpha \cdot \Delta \cdot n^2/\sqrt{\log|C|}.$$
\end{lemma}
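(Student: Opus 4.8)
The plan is to reduce the statement to the separation theorem of Arora--Rao--Vazirani \cite{ARV04}, in the refined ``weighted'' form established in \cite{ALN05}, applied to the $|C|$-point negative-type metric $(C,\sd_\bx)$; the point of the refinement is that the logarithmic factor should come out as $\sqrt{\log|C|}$ rather than $\sqrt{\log}$ of the total weight.

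First I would normalize. Let $W:=\sum_{u\in C}w(u)$ and let $p(u):=w(u)/W$, a probability distribution on $C$. Dividing the hypothesis and the desired conclusion by $W^2$, it suffices to exhibit $U\subseteq C$ with $\E_{u,v\sim p}\bigl|\sd_\bx(u,U)-\sd_\bx(v,U)\bigr|\gtrsim \beta\cdot\Delta/\sqrt{\log|C|}$, where $\beta:=\alpha n^2/W^2=\PP{u,v\sim p}{\sd_\bx(u,v)\geq\Delta}$. (We may assume $|C|\geq 2$, since otherwise $\beta=\alpha=0$ and the claim is vacuous.)

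Next I would invoke the weighted form of the ARV separation theorem: for a negative-type metric $(C,d)$ equipped with a probability measure $p$ and satisfying $\PP{u,v\sim p}{d(u,v)\geq\Delta}=\beta$, there exist sets $S,T\subseteq C$ with $d(S,T)\gtrsim\Delta/\sqrt{\log|C|}$ and $p(S)\cdot p(T)\gtrsim\beta$. The essential point here --- and where the \cite{ALN05} refinement of \cite{ARV04} is used --- is that the concentration and union-bound steps in the ARV ``matching'' argument range over the points of $C$, not over $\sum_u w(u)$ unweighted copies of them, so the bound involves $\log|C|$. Taking $U:=T$ and using that $\sd_\bx$ obeys the triangle inequality, the Fr\'echet map $v\mapsto\sd_\bx(v,U)$ is $1$-Lipschitz with respect to $\sd_\bx$, vanishes on $T$, and is at least $\sigma:=d(S,T)$ on $S$; hence for $u\in S$, $v\in T$ we have $|\sd_\bx(u,U)-\sd_\bx(v,U)|\geq\sigma$, and therefore $\E_{u,v\sim p}\bigl|\sd_\bx(u,U)-\sd_\bx(v,U)\bigr|\geq 2\,p(S)\,p(T)\,\sigma\gtrsim\beta\Delta/\sqrt{\log|C|}$. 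Scaling back by $W^2$ gives the lemma; the set $U$ is found in randomized polynomial time by the ARV rounding algorithm, and real-valued weights present no additional difficulty (there are finitely many candidate sets $U$ and the objective is continuous in $w$, so a routine limiting argument from rational weights applies).

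The main obstacle is exactly this logarithmic factor. A naive reduction that replaces each $u\in C$ by $w(u)$ unit-weight copies and then cites the unweighted \cite{ARV04} theorem yields only $\sqrt{\log(\sum_u w(u))}$, which may be as large as $\sqrt{\log n}$ and so gives nothing useful. One must instead run the ARV argument directly on the weighted instance (equivalently, on the metric contracted to $|C|$ points carrying the measure $p$), which is precisely the content of the \cite{ALN05} refinement. A secondary point to verify is that the dependence on the well-spread fraction is genuinely linear in $\beta$: it enters only through the product $p(S)p(T)\gtrsim\beta$ of the measures of the two separated sets, with no further $\beta$-loss in the separation $\sigma$.
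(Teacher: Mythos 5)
Your proof is correct and takes essentially the same route as the paper's. The paper cites the Arora--Lee--Naor lemma in its per-pair distributional form (a distribution over subsets $U\subseteq C$ such that each pair at $\sd_\bx$-distance $\geq\Delta$ is, with constant probability $p_1$, separated by $U$ at scale $\Delta/(c_5\sqrt{\log|C|})$) and then performs the averaging over the pair weights $w(u)w(v)$ explicitly via linearity of expectation and the trivial upper bound $\sum w(u)w(v)X_{u,v}\leq\alpha n^2$; your appeal to a pre-packaged ``weighted ARV separation theorem'' giving $S,T$ with $p(S)p(T)\gtrsim\beta$ and $d(S,T)\gtrsim\Delta/\sqrt{\log|C|}$ is exactly what that averaging produces, so the only real difference is that you fold the paper's short averaging step (which is also what delivers the linear dependence on $\beta$ down to arbitrarily small $\beta$) into the citation rather than carrying it out.
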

\begin{proof}
We use the following lemma from Arora, Lee, Naor
\begin{lemma}[Arora et al.~\cite{ALN05}]
There exist constants $c_5 \geq 1$ and $0 < p_1 < 1/2$ such that for sequence of vectors $\{\bx_v\}_{v\in C}$ that form a negative type metric, and every $\Delta > 0$, the following holds. There exists a distribution $\mu$ over subsets $U \subset C$ such that for every $u, v \in C$ with $\sd_\bx(u, v) \geq \Delta$,
$$\PP{\mu}{u\in U \text{ and } \sd_\bx(v,U)\geq \frac{\Delta}{c_5\sqrt{\log|X|}}}\geq p_1.$$
\end{lemma}

We show that with a constant probability a random set $U$ satisfies the lemma. 
Define 
$$X_{u,v} :=\I{ \sd_\bx(u,v) \geq \Delta \text{ and } u\in U \text{ and } \sd_\bx(v,U)\geq \frac{\Delta}{c_5\sqrt{\log|X|}}}.$$
Then, by above lemma,
$$ \E\left[ \sum_{u,v} w(u) w(v) X_{u,v}\right] = p_1 \sum_{u,v} w(u)w(v) \I{d(u,v)\geq \Delta} = p_1 \cdot \alpha \cdot n^2.$$
Since $\sum_{u,v} w(u) w(v) X_{u,v}\leq \alpha \cdot n^2$ with probability 1, we get
$$\P{ \sum_{u,v} w(u) w(v) X_{u,v} \geq p_1 \cdot \alpha \cdot n^2} \geq p_1.$$
Therefore, with probability $p_1$ we get a set $U$ such that
$$ p_1 \cdot \alpha\cdot n^2\leq \sum_{u,v} w(u) w(v) X_{u,v} \leq \sum_{u,v} w(u) w(v) \frac{|\sd_\bx(u,U)-\sd_\bx(v,U)|}{\Delta/c_5\sqrt{\log|X|}} $$
\end{proof}

\begin{proofof}{\autoref{thm:arvrounding}}
Wlog we assume that $A$ is a $(4,1/32)$ well spread and all of its vertices are covered
by $2k$ {\em disjoint} sets $A_1,\ldots,A_{2k}$ of diameter $\delta\leq c_3/\sqrt{\log k}$ according to $\sd_\bx$.
For $\alpha=1/256, \Delta=1/64$, we have
\begin{equation}
\label{eq:farpairs}
\sum_{u,v\in A} \I{\sd_\bx(u,v)\geq \Delta} \geq \alpha\cdot n^2.
\end{equation}

Choose a vertex $a_i$ as a center from each set $A_i$, and let $C:=\{ a_1,\ldots,a_{2k}\}$.
Also, let $w(a_i)=|A_i|$. Let $c: A \rightarrow C$, where for each vertex $v\in A$, if $v\in A_i$, then $c(a)=a_i$. 
Then, for $\delta < \Delta/4$,
\begin{equation}
 \sum_{u,v\in A} \I{\sd_\bx(u,v)\geq \Delta} \leq \sum_{u,v\in A} \I{\sd_\bx(c(u), c(v)) \geq \Delta/2} = \sum_{u,v\in C} w(u) w(v) \I{\sd_\bx(u,v)\geq \Delta/2}.
 \label{eq:farcenters}
 \end{equation}
where in the first inequality we used the fact that $\sd_\bx$ is a metric. Putting \eqref{eq:farpairs} and \eqref{eq:farcenters} together we can apply \autoref{lem:arvweighted} and we get a set $U\subseteq A$ such that
\begin{equation}
\label{eq:avgpairwisedistance}
 \sum_{u,v\in C} w(u) w(v) |\sd_\bx(u,U) - \sd_\bx(v,U)| \geq \frac{c'_3 \cdot  n^2}{\sqrt{\log k}}, 
 \end{equation}
where $c'_3>0$ is a universal constant. Now choose $c_3 =\min(c'_3/2\sqrt{\log k}, \Delta/4)$. We get,
\begin{eqnarray*} 
\sum_{u,v\in A} |\sd_\bx(u,U)-\sd_\bx(v,U)| &\geq& \sum_{u,v\in A} |\sd_\bx(c(u),U) - \sd_\bx(c(v),U)| - |A| \sum_{v\in A} \sd_\bx(v,c(v))\\
&\geq& \sum_{u,v\in C} |\sd_\bx(u,U) - \sd_\bx(v,U)| - \delta n^2 \gtrsim \frac{n^2}{\sqrt{\log k}}.
\end{eqnarray*}
where the last inequality follows by \eqref{eq:avgpairwisedistance}. The theorem follows by an application of \autoref{fact:sepset}.
\end{proofof}

\subsection{Rounding Sherali-Adams Relaxations}

\begin{proofof}{\autoref{thm:sa}} 
Wlog we assume that $A$ is a $(4,1/32)$ well spread and all of its vertices are covered
by $2k$ {\em disjoint} sets $A_1,\ldots,A_{2k}$ of diameter $\delta\leq c_4$ according to $\sd_\bx$.
We choose a vertex $a_i$ arbitrarily as the center of $A_i$, and define $C:=\{a_1,\ldots,a_{2k}\}$
Also, let $c:A\rightarrow C$, where for  each $v\in A$, if $v\in A_i$, then $c(a)=a_i$.

Let $R=C$, and pick a random $b: R \to \{0,1\}$ with
probability $p^{R,b}$, and define ${\mathcal D}(u,v):= \frac {1}{p^{R,b}} \d^{R,b} (u,v)$. 
Then, we get \eqref{eq:expdistance} and \eqref{eq:sumdistance}, i.e., $\sum_{u,v} \cD(u,v) = n^2$, and for all pairs $u,v \in V,$ $\E[\cD(u,v)]=\sd_\bx(u,v)$. By linearity of expectation and Markov inequality, with probability $1/65$ we get,
$$ \sum_{u,v\in V: u\notin A \text{ or } v\notin A} \cD(u,v) \leq (1-1/64) n^2.$$
Thus, $\sum_{u,v\in A} \cD(u,v)\geq n^2/64$. By \eqref{eq:expdistance}, union bound and Markov inequality we get the following: there is a universal constants $c'_4 := 100 \cdot c_4$, such that with a constant probability 
we get a metric $\d(\cdot,\cdot)$ satisfying all the following conditions:
\begin{eqnarray}
 \sum_{(u,v) \in E} \d(u,v) &\lesssim& \sum_{(u,v)\in E} \sd_\bx(u,v) \nonumber \\
 \sum_{u,v \in A} \d(u,v) &\geq& n^2/64. \label{eq:avgsadistance} \\
 \sum_{u\in A} \d(u,c(u)) &\leq& c'_4 \cdot |A| \label{eq:avgballdistance} 
 \end{eqnarray}

Furthermore, $d(.,.)$ is an integral cut on $C$, i.e., for all three vertices $u,v,w\in C$ there are two whose distance is zero w.r.t.  $d(.,.)$. Define the Frechet embedding
\[ f(v):= d(a_1,v) \]
Then since $d(.,.)$ is a metric, we have
\[ \sum_{(u,v)\in E} |f(u)-f(v)| \leq \sum_{(u,v)\in E}  d(u,v) \lesssim \sum_{(u,v)\in E}  \sd_\bx(u,v) \]
We show that
$ \sum_{u,v\in A} |f(u)-f(v)| \geq c'_4 n^2 $
for a sufficiently small $c'_4>0$, and this completes the proof using \autoref{fact:frechet}. 

\begin{claim}
For every pair of vertices $u,v\in A$,
$$ |d(a_1,u) - d(a_1,v)| \geq d(u,v) - 2d(u,c(u)) - 2(v,c(v)).$$
\end{claim}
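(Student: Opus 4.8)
The plan is to prove the claim by a short ``push to the centers, use the cut structure on $C$, push back'' argument: two applications of the triangle inequality bracket a single use of the fact that $d(\cdot,\cdot)$ restricted to $C$ is an integral cut metric and that $a_1\in C$. Throughout I would write $r:=d(u,c(u))$ and $s:=d(v,c(v))$ for brevity, and recall that $c(u),c(v)\in C$ by definition of $c$, and $a_1\in C$ since $C=\{a_1,\ldots,a_{2k}\}$.

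First I would replace $u$ and $v$ by their centers. Since $d(\cdot,\cdot)$ satisfies the triangle inequality, $|d(a_1,u)-d(a_1,c(u))|\le r$ and $|d(a_1,v)-d(a_1,c(v))|\le s$, so the number $d(a_1,u)-d(a_1,v)$ lies within $r+s$ of $d(a_1,c(u))-d(a_1,c(v))$; hence
$$|d(a_1,u)-d(a_1,v)|\ \ge\ |d(a_1,c(u))-d(a_1,c(v))|\ -\ r\ -\ s.$$
Next I would use the structure of $C$. Because $d(\cdot,\cdot)$ is an integral cut on $C$, the relation ``distance zero'' is an equivalence relation on $C$ with at most two classes (if there were three classes, three representatives would form a triple with no zero pair), and one checks — either by a four-case split on which side of the cut $c(u),c(v)$ lie, or by observing that $x\mapsto d(a_1,x)$ is an isometric embedding of $(C,d)$ into $\R$ — that
$$|d(a_1,c(u))-d(a_1,c(v))|\ =\ d(c(u),c(v)).$$
Finally, the triangle inequality gives $d(u,v)\le r+d(c(u),c(v))+s$, i.e. $d(c(u),c(v))\ge d(u,v)-r-s$. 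Chaining the three displays yields
$$|d(a_1,u)-d(a_1,v)|\ \ge\ d(c(u),c(v))-r-s\ \ge\ d(u,v)-2r-2s\ =\ d(u,v)-2d(u,c(u))-2d(v,c(v)),$$
which is the claim.

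I do not expect a genuine obstacle here; this is essentially a bookkeeping lemma. The only point deserving care is the middle step: a general metric gives only $|d(a_1,c(u))-d(a_1,c(v))|\le d(c(u),c(v))$, which is the wrong direction, so the equality — equivalently, that $(C,d)$ embeds isometrically on a line — genuinely uses that $d$ is a \emph{cut} metric on $C$, which is exactly where the feasibility of the $2k$-round Sherali--Adams relaxation (forcing $\cD$, hence $d$, to be a cut metric on $C=R$) enters. The other place to be slightly careful is checking that the reverse triangle inequality in the first step is applied with the signs going the right way, which is routine.
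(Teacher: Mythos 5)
Your proof is correct and in the same spirit as the paper's, but it is organized a bit more cleanly. The paper proves the claim by a case split: either $d(c(u),c(v))=0$, in which case the right-hand side is $\le 0$ by the triangle inequality so there is nothing to prove, or else (WLOG) $d(c(u),a_1)=0$, in which case $d(a_1,\cdot)=d(c(u),\cdot)$ and a direct reverse-triangle-inequality computation finishes it. You instead avoid the case analysis altogether by isolating the one place the cut structure is really used: the observation that on a cut metric with $a_1\in C$ the map $x\mapsto d(a_1,x)$ is an isometry of $(C,d)$ into $\R$, so $|d(a_1,c(u))-d(a_1,c(v))|=d(c(u),c(v))$ exactly (not just $\le$). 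Sandwiching this equality between two applications of the triangle inequality (pushing $u,v$ to their centers and back) gives the bound in one chain. Both arguments are correct and of comparable length; yours makes the role of the Sherali--Adams cut-metric constraint a little more transparent, and the paper's case split is the same fact unpacked by hand. One small point worth noting in either writeup: the paper's case (ii) actually yields the stronger bound $d(u,v)-2d(u,c(u))$ without the $-2d(v,c(v))$ term, which of course still implies the claim; your uniform chain gives the symmetric (and claimed) form directly. Also note the claim as printed has a typo ($2(v,c(v))$ for $2d(v,c(v))$), which you correctly read through.
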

\begin{proof}
We consider two cases. i) $d(c(u),c(v))=0$ ii) $d(c(u),a_1)=0$. 
Since $d(.,.)$ defines an integral cut on $C$ one of the two cases would happen without loss of generality.

{\bf case i) } This case follows from the fact that the LHS is always non-negative but the RHS  is less than or equal to zero. This is because $|d(a_1,u)-d(a_1,v)|\geq 0$ but by triangle inequality 
$d(u,v) \leq d(u,c(u)) + d(v,c(v))$.

{\bf case ii) } This case follows by the assumption that $d(.,.)$ satisfies triangle inequality.
$$ |d(a_1,u)-d(a_1,v)| = |d(c(u),u) - d(c(u),v)| \geq d(v,c(u)) - d(u,c(u)) \geq d(u,v) - 2d(u,c(u)).$$
\end{proof}
Therefore,
$$ \sum_{u,v\in A} |f(u) - f(v)| \geq \sum_{u,v\in A} d(u,v) - 2|A| \sum_{u\in A} d(u,c(u)) \geq n^2/64 - 2c'_4\cdot n^2 \geq c'_4 n^2.$$
where the second inequality follows by \eqref{eq:avgsadistance} and \eqref{eq:avgballdistance},
and the last inequality follows by letting $c'_4 < 1/256$, and $c_4 < 1/25600$.
\end{proofof}

\bibliographystyle{alpha}
\bibliography{references,trees}

\end{document}